\newtheorem{theorem}{Theorem}[section]
\newtheorem{lemma}[theorem]{Lemma}
\newtheorem{corollary}[theorem]{Corollary}
\newtheorem{proposition}[theorem]{Proposition}
\newcommand{\executeiffilenewer}[3]{%
\ifnum\pdfstrcmp{\pdffilemoddate{#1}}%
{\pdffilemoddate{#2}}>0%
{\immediate\write18{#3}}\fi%
}
\newcommand{%
\executeiffilenewer{fig/.svg}{fig/.pdf}%
{inkscape -z -D --file=fig/.svg %
--export-pdf=fig/.pdf --export-latex}%
\input{Fig/.pdf_tex}%
}[1]{%
\executeiffilenewer{fig/#1.svg}{fig/#1.pdf}%
{inkscape -z -D --file=fig/#1.svg %
--export-pdf=fig/#1.pdf --export-latex}%
\input{Fig/#1.pdf_tex}%
}
\definecolor{blueblack}{rgb}{0,0,.7}
\newcommand{\emphdef}[1]{%
  \textcolor{blueblack}{%
    \textbf{\emph{#1}}%
  }%
}
\newcommand{\OPT}{\textrm{OPT}}
\newcommand{\dist}{\textrm{dist}}
 \def\cramped
\title{A Fixed Parameter Tractable Approximation Scheme for the
  Optimal Cut Graph of a Surface\thanks{The research of the second
    author leading to these results has received funding from the People Programme
(Marie Curie Actions) of the European Union's Seventh Framework Programme
(FP7/2007-2013) under REA grant agreement n° [291734].}}%
\author{Vincent Cohen-Addad\thanks{D\'epartement d'informatique,
    \'Ecole normale sup\'erieure, Paris, France. Email: \protect\url{vcohen@di.ens.fr}}
\and
Arnaud de Mesmay\thanks{IST Austria, Klosterneuburg, Austria. Email: \protect\url{arnaud.de.mesmay@ist.ac.at}}}
\date{}
\begin{document}

\maketitle

\begin{abstract}
Given a graph $G$ cellularly embedded on a surface $\Sigma$ of genus $g$, a cut graph is a
subgraph of $G$ such that cutting $\Sigma$ along $G$ yields a topological
disk. We provide a fixed parameter tractable approximation scheme for
the problem of computing the shortest cut graph, that is, for any
$\varepsilon >0$, we show how to compute a $(1+ \varepsilon)$ approximation
of the shortest cut graph in time $f(\varepsilon, g)n^3$. 

Our techniques first rely on the computation of a spanner for the
problem using the technique of brick decompositions, to reduce the problem to the
case of bounded tree-width. Then, to solve the bounded tree-width case, we
introduce a variant of the surface-cut decomposition of Ru\'e, Sau and
Thilikos, which may be of independent interest.
\end{abstract}

\section{Introduction}

Embedded graphs are commonly used to model a wide array of discrete
structures, and in many cases it is necessary to consider embeddings
into surfaces instead of the plane or the sphere. For example, many
instances of network design actually feature some crossings, coming
from tunnels or overpasses, which are appropriately modeled by a
surface of small genus. In other settings, such as in computer
graphics or computer-aided design, we are looking for a discrete model
for objects which inherently display a non-trivial topology (e.g.,
holes), and graphs embedded on surfaces are the natural tool for
that. From a more theoretical point of view, the graph structure theorem of
Robertson and Seymour showcases a very strong connection between
graphs embedded on surfaces and minor-closed families of graphs.

When dealing with embedded graphs, a classical problem, to which a lot
of effort has been devoted in the past decade, is to find a
\textit{topological decomposition} of the underlying surface, i.e., to
cut the surface into simpler pieces so as to simplify its topology, or
equivalently to cut the embedded graph into a planar graph, see the
recent surveys~\cite{c-tags-12,e-cocb-12}. This is a fundamental
operation in algorithm design for surface-embedded graphs, as it
allows to apply the vast number of tools available for planar graphs
to this more general setting. Furthermore, making a graph planar is
useful for various purposes in computer graphics and mesh processing, see
for example~\cite{whds-retfi-04}. No matter the
application, a crucial parameter is always the length of the
topological decomposition: having good control on it ensures that the
meaningful features of the embedded graphs did not get too much
distorted during the cutting.

In this article, we are interested in the problem of computing a short
cut graph: For a graph $G$ with $n$ vertices embedded on a surface $\Sigma$ of genus $g$, a \emphdef{cut graph} of $G$ is a subgraph $C \subseteq
G$ such that cutting $\Sigma$ along $C$ gives a topological disk. The
problem of computing the shortest possible cut graph of an embedded
graph was introduced by Erickson and Har-Peled~\cite{eh-ocsd-04}, who
showed that it is NP-hard, provided an $n^{O(g)}$ algorithm to
compute it, as well as an $O(g^2 n \log n)$ algorithm to compute a $O(\log^2 g)$ approximation. Now, since in most
practical applications, the genus of the embedded graph tends to be
quite small compared to the complexity of the graph, it is natural to also
investigate this problem through the lens of parametrized
  complexity, which provides a natural framework to study the
dependency of cutting algorithms with respect to the genus. In this
direction, Erickson and Har-Peled asked whether computing the shortest
cut graph is \textit{fixed-parameter tractable}, i.e. whether it can
be solved in time $f(g) n^{O(1)}$ for some function $f$. This
question is, up to our knowledge, still open, and we address here the
neighborly problem of devising a good approximation algorithm working
in fixed parameter tractable time with respect to the genus; we refer to the survey of Marx~\cite{m-pcaa-08}
for more background on these algorithms at the
intersection of approximation algorithms and parametrized complexity.

\paragraph{Our results.} In this article, we provide a
\textit{fixed-parameter tractable approximation scheme} for the
problem of computing the shortest cut graph of an embedded graph. Namely, we prove the following theorem.

\begin{theorem}\label{T:main} Let $G$ be a weighted graph cellularly embedded on a surface
  $\Sigma$ of genus $g$. For any $\varepsilon >0$,
  there exists an algorithm computing a
  $(1+\varepsilon)$-approximation of the shortest cut graph of $G$,
  which runs in time $f(\varepsilon,g)n^{3}$ for some function $f$.
\end{theorem}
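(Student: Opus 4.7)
The plan is to split the algorithm into two stages, mirroring the abstract. First, I will build, in time $f_1(\varepsilon,g)\cdot n^3$, a subgraph $H\subseteq G$ of treewidth bounded by some $t(\varepsilon,g)$ that contains a cut graph of $G$ of length at most $(1+\varepsilon)\cdot\OPT$; I will call $H$ a \emph{spanner} for the problem. Second, I will design a dynamic program that, given such an $H$ embedded on $\Sigma$, computes the shortest cut graph inside $H$ in time $f_2(\varepsilon,g)\cdot n^{O(1)}$. Composing these two stages immediately yields Theorem~\ref{T:main}.

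\paragraph{Spanner via brick decomposition.} To build $H$, I will adapt the brick decomposition framework of Klein, later extended to surface-embedded graphs. I would start by computing a coarse short cut graph $C_0$ using the $O(\log^{2}g)$-approximation of Erickson--Har-Peled; cutting $\Sigma$ along $C_0$ gives a disk, and I would then layer this disk by shortest-path cuts into ``bricks'' of diameter $O(\varepsilon\cdot\OPT/g)$. Inside each brick I add enough portal shortest paths so that any subpath of $\OPT$ intersecting the brick can be rerouted along brick boundaries at a multiplicative cost of at most $1+\varepsilon$, following the standard mortar graph analysis. The resulting $H$ is the union of $C_0$ and all added shortest paths; each brick contributes bounded treewidth and the bricks glue along $C_0$ in a tree-like fashion, giving total treewidth depending only on $\varepsilon$ and $g$. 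A careful accounting, in the spirit of the surface extensions of Baker's layering, is needed to control the dependency on the genus.

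\paragraph{Dynamic programming on a surface-cut decomposition.} A plain tree decomposition of $H$ is topologically blind, which is why I would use the variant of the Ru\'e--Sau--Thilikos surface-cut decomposition. In this decomposition each separator is not merely small but also topologically simple: it corresponds to a bounded-complexity collection of arcs on $\Sigma$ whose removal splits the surface into subsurfaces of controlled genus and boundary complexity. The dynamic program then processes the decomposition bottom-up, maintaining at each node a state that records (i) the intersection of the partial cut graph with the separator, (ii) the connectivity pattern of the partial cut graph induced through the already-processed part, and (iii) the topological type of the surface obtained by cutting the already-processed subsurface along the partial cut graph. Because the separators have bounded size and bounded topological complexity, the total number of states is a function only of $\varepsilon$ and $g$, so the DP runs in FPT time, and the final answer is read off at the root by requiring the global cut surface to be a disk.

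\paragraph{Main obstacle.} I expect two technical difficulties, both topological in nature. On the spanner side, one must argue that the rerouted approximation of $\OPT$ still has the correct topological type, that is, still cuts $\Sigma$ into a disk; a careless rerouting could create or destroy handles. The right fix is to ensure every detour is homotopic, relative to its endpoints, to the piece of $\OPT$ it replaces inside its brick, which in turn forces the brick construction to respect the topology of $C_0$. On the DP side, the hard part will be defining a state that, using only local information at a separator, faithfully encodes the global condition ``cutting $\Sigma$ along the partial cut graph plus a future completion yields a disk''. This is exactly the role of the surface-cut decomposition, and extending the Ru\'e--Sau--Thilikos framework so that the state captures the evolving topology (and not just graph-theoretic connectivity) is where the main new work lies.
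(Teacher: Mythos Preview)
Your proposal has the right two pillars (brick-decomposition spanner, then surface-cut DP), but there is a genuine gap in the first stage. You assert that the spanner $H$ itself has treewidth bounded by some $t(\varepsilon,g)$, because ``each brick contributes bounded treewidth and the bricks glue along $C_0$ in a tree-like fashion.'' This is not true: the mortar graph is built by cutting the disk bounded by $C_0$ into horizontal strips of shortest paths and then slicing those strips by supercolumns, so the bricks are arranged in a \emph{grid}, not a tree, and neither the number of bricks nor the treewidth of the spanner is controlled by any function of $\varepsilon$ and $g$. What the brick machinery does bound is only the total \emph{length} of the spanner, namely $\ell(G_{span})\le f(g,\varepsilon)\cdot\OPT$.

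The paper closes exactly this gap with a step you are missing: after building the spanner it invokes the contraction-decomposition theorem of Demaine, Hajiaghayi and Mohar, which partitions the edges of any genus-$g$ graph into $k$ classes so that contracting any one class yields a graph of treewidth $O(g^2 k)$. Taking $k=f(g,\varepsilon)/\varepsilon$ and contracting the lightest class costs at most $\varepsilon\cdot\OPT$ in length, and only \emph{then} does one have a bounded-treewidth graph $G_{tw}$ on which the surface-cut DP is run. After solving exactly on $G_{tw}$, the contracted edges are put back (adding at most $\varepsilon\cdot\OPT$) and superfluous edges are deleted until a single disk remains. Without this contraction step there is no FPT bound on the DP.

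A smaller remark on your ``main obstacle'': the paper does not argue that each local detour is homotopic rel endpoints to the piece of $\OPT$ it replaces. Instead it shows, via a crossing/parity argument in the disk $D$ bounded by the initial approximate cut graph, that any non-contractible cycle avoiding the rerouted graph $\widehat{C}$ would force two points of $\partial D$ that are connected by $\OPT$ inside $D$; the brick structure theorem guarantees these are still connected by $\widehat{C}$, a contradiction. So only \emph{connectivity} of boundary vertices through each brick needs to be preserved, which is exactly what the Steiner-tree structure theorem for bricks provides, and no homotopy bookkeeping is required.
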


\paragraph{Our techniques.} Our algorithm uses the brick
decompositions of Borradaile, Klein and Mathieu~\cite{bkm-nasst-09} for
subset-connectivity problems in planar graphs, which have been extended
to bounded genus graphs by Borradaile, Demaine and
Tazari \cite{bdt-ptass-13}. Although brick decompositions are now a
common tool for optimization problems for embedded graphs, it is to
our knowledge the first time they are applied to compute
topological decompositions. In a
nutshell, the idea is the following:

\begin{enumerate}\cramped
\item We first compute a \textit{spanner} $G_{span}$ for our problem, namely a
  subgraph of the input graph containing a
  $(1+\varepsilon)$-approximation of the optimal cut graph, and having
  total length bounded by $f(g,\varepsilon)$ times the length of the
  optimal cut graph, for some function $f$. This is achieved via \textit{brick decompositions}.

\item Using a result of \textit{contraction-decomposition} of Demaine, Hajiaghayi and Mohar~\cite{dhm-aacd-10}, we
  \textit{contract} a set of edges of controlled length in $G_{span}$,
  obtaining a graph $G_{tw}$ of bounded tree-width.
\item We use dynamic programming on $G_{tw}$ to compute its
  optimal cut graph.
\item We incorporate back the contracted edges, which gives us a
  subgraph of $G$ cutting the surface into one or more disks. Removing
  edges so that the complement is a single disk gives our final cut graph.
\end{enumerate}

The first steps of this framework mostly follow from the same
techniques as in the article of
Borradaile et al.~\cite{bdt-ptass-13}, the only difference being that we need
a specific structure theorem to show that the obtained graph is indeed
a spanner for our problem. However, as the restriction of a cut graph
to a brick, i.e., a small disk on the surface, is a forest, this
structure theorem is a variation of an existing theorem
for the Steiner tree problem~\cite{bkm-nasst-09}.

The main difficulty of this approach lies instead in the third
step. Since a cut graph is inherently a topological notion, it is key
for a dynamic programming approach to work with a tree-decomposition
having nice topological properties. An appealing concept has been
developed by Ru\'e, Sau and Thilikos~\cite{rst-dpgs-14} for the neighborly (and for our
purpose, equivalent) notion of branch-decomposition: they introduced
\textit{surface-cut decompositions} with this exact goal of giving a
nice topological structure to work with when designing dynamic
programs for graph on surfaces (see also Bonsma~\cite{b-ssdsi-12} for 
a related concept). However, their approach is cumbersome
for our purpose when the graph embeddings are not \textit{polyhedral}
(we refer to the introduction for precise definitions), as it first
relies on computing a \textit{polyhedral decomposition} of the input
graph.
While dynamic programming over these polyhedral decompositions can be
achieved for the class of problems that they consider, it seems
unclear how to do it for the problem of computing a shortest cut
graph.

We propose two ways to circumvent this issue. In the first one, we
observe that the need for polyhedral embeddings in surface-cut
decompositions can be traced back exclusively to a theorem of Fomin
and
Thilikos~\cite[Lemma~5.1]{rst-dpgs-14}\cite[Theorem~1]{ft-sdppge-07}
relating the branch-width of an embedded graph and the carving-width
of its medial graph, the proof of which uses crucially that the graph
embedding is polyhedral. But another proof of this theorem which does 
not rely on this assumption was obtained by
Inkmann~\cite[Theorem~3.6.1]{i-tpdgs-07}. 
Therefore, the full strength of surface cut decompositions
can be used without first relying on polyhedral decompositions.

However, since Inkmann's proof is intricate and  has never been published 
we also propose an alternative, self-contained, solution
tailored to our problem. For our purpose, it is enough to make the
graph polyhedral at the end of the second step of the framework while
preserving a strong bound on the branch-width of the graph, we
show that this can be achieved by superposing medial graphs and
triangulating with care. With appropriate heavy weights on the new
edges, we can ensure that they do not impact the length of the optimal
cut graph and that we still obtain a valid solution to our problem.

Finally, both approaches allow us to work with a branch decomposition
that possesses a nice topological structure. 
We then show how to exploit it to write a dynamic program
to compute the shortest cut graph in fixed parameter tractable time
for graphs of bounded tree-width.

\paragraph{Organization of the paper.} We start by introducing the
main notions surrounding embedded graphs and brick decompositions in
Section~\ref{S:prelim}. We then prove the structure theorem in
Section~\ref{S:struc}, showing that the brick decomposition with
portals contains a cut graph which is at most (1+$\varepsilon$) longer
than the optimal one. In Section~\ref{S:algorithm}, we show how to
combine this structure theorem with the aforementioned framework to
obtain our algorithm. This algorithm relies on one that solves the
problem when the input graph has bounded tree-width, which is
described in Section~\ref{S:tw}.

\section{Preliminaries}\label{S:prelim}

All graphs $G=(V,E)$ in this article
are multigraphs, possibly with loops, have $n$ vertices, $m$ edges, are undirected and their edges are
weighted with a length $\ell(e)$. These weights induce naturally a length
on paths and subgraphs of $G$. 

\paragraph{Graphs on surfaces.} We will be using classical notions of graphs embedded on surfaces, for
more background on the subject, we refer to the textbook of Mohar and
Thomassen~\cite{mt-gs-01}. Throughout the article, $\Sigma$ will denote a
compact connected surface of Euler genus $g$, which we will simply
call genus. An \emphdef{embedding} of $G$ on $\Sigma$ is a crossing-free
drawing of $G$ on $\Sigma$, i.e. the images of the vertices are pairwise
distinct and the image of each edge is a simple path intersecting the
image of no other vertex or edge, except possibly at its endpoints. We
will always identify an abstract graph with its embedding. A
\emphdef{face} of the embedding is a connected component of the
complement of the graph. A \emphdef{cellular embedding} is an
embedding of a graph where every face is a topological disk. Every
embedding in this paper will be assumed to be cellular. A graph
embedding is a triangulation if all the faces have degree
three. \emphdef{Euler's formula} states that for a graph $G$ embedded
on a surface $\Sigma$, we have $n-m+f=2-g$, for $f$ the number of faces of
the embedding. A \emphdef{noose} is an embedding of the circle
$\mathbb{S}^1$ on $\Sigma$ which intersects $G$ only at its vertices. An
embedding of a graph $G$ on a surface is said to be \emphdef{polyhedral}
if $G$ is 3-connected and the smallest length of a non-contractible
noose is at least 3 or if $G$ is a clique and it has at most 3
vertices. In particular, a polyhedral embedding is cellular. If $G$ is
a graph embedded on $\Sigma$, the surface $\Sigma'$ obtained by
\emphdef{cutting} $\Sigma$ along $G$ is the disjoint union of the faces of
$G$, it is a (a priori disconnected) surface with boundary. When we
cut a surface along a set of nooses, viewed as a graph, the resulting
connected components will be called \emphdef{regions}. A
\emphdef{combinatorial map} of an embedded graph is the combinatorial
description of its embedding, namely the cyclic ordering of the edges
around each vertex.

Given an embedded graph $G$, the \emphdef{medial graph} $M_G$ is the
embedded graph obtained by placing a vertex $v_e$ for every edge $e$
of $G$, and connecting the vertices $v_e$ and $v_e'$ with an edge
whenever $e$ and $e'$ are adjacent on a face of $G$. The
\emphdef{barycentric subdivision} of an embedded graph $G$ is the
embedded graph obtained by adding a vertex on each edge and on each
face and an edge between every such face vertex and its adjacent
(original) vertices and edge vertices.

For $\Sigma$ a surface and $G$ a graph embedded on $\Sigma$, a
\emphdef{cut graph} of $(\Sigma,G)$ is a subgraph $H$ of $G$ whose unique
face is a disk. The length of the cut graph is the sum of the lengths
of the edges of $H$. Throughout the whole paper, OPT will denote the length of the shortest
cut graph of $(\Sigma,G)$. 



We refer the reader to~\cite{bdt-ptass-13,rst-dpgs-14} for
definitions pertaining to \emphdef{tree decomposition} and
\emphdef{branch decomposition}.  A \emphdef{carving decomposition} of
a graph $G$ is the analogue of a branch decomposition with vertices
and edges inverted, with the \emphdef{carving-width} defined
analogously. A \emphdef{bond carving decomposition} is a special kind
of carving decomposition where the middle sets always separate the
graph in two connected components. Since these concepts only appear
sporadically in this paper, we refer to~\cite{rst-dpgs-14} for a
precise definition.

\paragraph{Mortar graph and bricks.} The framework of mortar graphs
and bricks has been developed by Borradaile, Klein and
Mathieu~\cite{bkm-nasst-09} to efficiently compute spanners for subset
connectivity problems in planar graphs. 
We recall here
the main definitions around mortar graphs and bricks and refer to the
articles~\cite{bdt-ptass-13,bkm-nasst-09} for more background on these objects.

Let $G$ be a graph embedded on $\Sigma$ of genus $g$.
A path $P$ in a graph $G$ is $\varepsilon$-short in $G$ if for every pair
of vertices $x$ and $y$ on $P$, the distance from $x$ to $y$ along $P$ is
at most $(1+\varepsilon)$ times the distance from $x$ to $y$ in $G$:
$\textrm{dist}_P(x,y) \leq (1+\varepsilon) \textrm{dist}_G(x,y)$. For
$\varepsilon >0$, let $\kappa(g,\varepsilon)$ and $\alpha(g,\varepsilon)$
be functions to be defined later. A \emphdef{mortar graph}
$MG(G,\varepsilon)$ is a subgraph of $G$ such that $\ell(MG) \leq \alpha OPT$,
and the faces of $MG$ partition $G$ into \emphdef{bricks} $B$ that
satisfy the following properties:

\begin{enumerate}\cramped
\item $B$ is planar.
\item The boundary of $B$ is the union of four paths in clockwise order $N,
  E, S$, $W$.
\item $N$ is $0$-short in $B$, and every proper subpath of $S$ is
  $\varepsilon$-short in $B$.
\item There exists a number $k \leq \kappa$ and vertices $s_0 \ldots s_k$ ordered from left to right along $S$ such that, for any vertex $x$ of $S[s_i,s_{i+1})$, $\dist_S(x,s_i) \leq \varepsilon \dist_B(x,N)$.

\end{enumerate}

The mortar graph is computed using a slight variant of the procedure
in~\cite[Theorem~4]{bdt-ptass-13}, the idea is the following:
\begin{enumerate}\cramped
 \item Cut $\Sigma$ along an approximate cut graph, yielding a disk $D$
   with boundary $\partial D$.
 \item Find shortest paths between certain vertices of $\partial D$. This
   defines the $N$ and $S$ boundaries of the bricks.
 \item Find shortest paths between vertices of the previous paths. These
   paths are called the columns.
 \item Take every $\kappa$th path found in the last step. These paths are
   called the \emphdef{supercolumns} and form the $E$ and $W$ boundaries of
   the bricks. The constant $\kappa$ is called the \emphdef{spacing} of the supercolumns.
 \end{enumerate}

This leads to the following theorem to
compute the mortar graph in time $O(g^2 n \log n)$.

\begin{theorem}\label{T:mortar}
Let $\varepsilon >0$ and $G$ be a graph embedded on $\Sigma$ of genus $g$. There exists $\alpha=O(\log^2g \varepsilon^{-1})$ such that there is a mortar
graph $MG(G,\varepsilon)$ of $G$ such that $\ell(MG) \leq \alpha OPT$ and
the supercolumns of $MG$ have length $\leq \varepsilon OPT$ with spacing
$\kappa=O(\log^2g \varepsilon^{-3})$. This
mortar graph can be found in $O(g^2n \log n)$ time.
\end{theorem}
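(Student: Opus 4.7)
The plan is to follow the four-step procedure sketched immediately after the theorem statement, adapting the planar mortar graph construction of Borradaile, Klein and Mathieu to the bounded-genus setting in essentially the same way as in \cite{bdt-ptass-13}. First I would invoke the $O(\log^2 g)$-approximation of Erickson and Har-Peled to compute in $O(g^2 n \log n)$ time an initial cut graph $C_0$ with $\ell(C_0) = O(\log^2 g)\cdot \OPT$, and then cut $\Sigma$ along $C_0$ to obtain a topological disk $D$ whose boundary $\partial D$ has length at most $2\ell(C_0) = O(\log^2 g)\cdot\OPT$, since each edge of $C_0$ contributes at most twice to $\partial D$.

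Next I would run the planar brick decomposition of Borradaile-Klein-Mathieu inside $D$, treating $\partial D$ as the boundary of a single face: pick a root vertex on $\partial D$, grow the shortest-path structures that yield the $N$ and $S$ boundaries, then extract the columns as shortest paths from $S$ to $N$ and select every $\kappa$-th column as a supercolumn. The four brick properties (planarity, $NESW$ structure, $\varepsilon$-shortness of $N$ and of proper subpaths of $S$, and the $\dist_S(x,s_i)\le \varepsilon\,\dist_B(x,N)$ condition) are immediate from the planar BKM construction since $D$ is itself a disk, so no genus-related subtlety enters these conditions.

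The length analysis then combines two multiplicative factors. The standard BKM argument shows that the mortar graph produced inside $D$ has length $O(\varepsilon^{-1})\cdot \ell(\partial D)$, because it is built from $O(\varepsilon^{-1}\log(\cdot))$ shortest-path levels whose total length telescopes against $\ell(\partial D)$. Substituting the bound from Step~1 yields $\ell(MG) \le \alpha\cdot \OPT$ with $\alpha = O(\log^2 g\cdot \varepsilon^{-1})$. The supercolumn length bound is obtained by a standard averaging/shifting argument: the total length of all columns is $O(\varepsilon^{-1}\log^2 g)\cdot \OPT$, so choosing the spacing $\kappa = O(\log^2 g\cdot \varepsilon^{-3})$ and picking the best offset within each block of $\kappa$ columns guarantees that the chosen supercolumns have total length $\le \varepsilon\cdot \OPT$.

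Putting the steps together, the total running time is dominated by the Erickson--Har-Peled call at $O(g^2 n \log n)$ plus a planar BKM construction on $D$ which runs in $O(n\log n)$, giving the claimed $O(g^2 n \log n)$ bound. The main obstacle I expect is not any individual step (each being a direct adaptation of \cite{bkm-nasst-09,bdt-ptass-13}) but the bookkeeping needed to verify that cutting along the approximate cut graph $C_0$ really yields an instance on the disk $D$ in which $\OPT$ can play the role that the input cost plays in the Steiner/TSP settings of \cite{bdt-ptass-13}; concretely, one has to check that every length appearing in the BKM analysis is bounded by $\ell(\partial D)=O(\log^2 g)\cdot\OPT$ rather than by some quantity specific to Steiner trees or spanning tours, and that the choice of root vertex and the initial shortest-path trees are compatible with the boundary produced by $C_0$.
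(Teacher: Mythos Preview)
Your proposal is correct and follows essentially the same approach as the paper: compute the Erickson--Har-Peled $O(\log^2 g)$-approximate cut graph, cut to a disk, and apply the planar BKM construction. The paper is even terser, simply invoking Theorem~\ref{T:BKM} as a black box with $\alpha_0=O(\log^2 g)$ to read off $\alpha=(2\alpha_0+1)\varepsilon^{-1}=O(\log^2 g\,\varepsilon^{-1})$ and $\kappa=\alpha_0\varepsilon^{-2}(1+\varepsilon^{-1})=O(\log^2 g\,\varepsilon^{-3})$, which also disposes of your bookkeeping worry since that theorem is stated for a generic $\OPT$ bounding the outer face length.
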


The proof of Theorem~\ref{T:mortar} relies on the following planar construction of the
mortar graph obtained by
Borradaile, Klein and Mathieu~\cite{bkm-nasst-09} (we cite the version
of Borradaile, Demaine and Tazari~\cite[Theorem~2]{bdt-ptass-13}.)

\begin{theorem}\label{T:BKM}
Let $\varepsilon >0$ and $G$ be a planar graph with outer face $H$,
such that $\ell(H) \leq \alpha_0 OPT$, for some $\alpha_0$. For
$\alpha=(2\alpha_0+1)\varepsilon^{-1}$, there is a mortar graph $MH(G,
\varepsilon)$ containing $H$ whose length is at most $\alpha OPT$ and
whose supercolumns have length $\varepsilon OPT$ with spacing
$\kappa=\alpha_0\varepsilon^{-2}(1+\varepsilon^{-1})$. The mortar graph can be found
in $O(n \log n)$ time.
\end{theorem}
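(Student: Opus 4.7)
\medskip

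The plan is to build $MH(G,\varepsilon)$ in two stages, first constructing the horizontal skeleton that yields the $N$/$S$ boundaries of the future bricks, and then slicing each resulting horizontal strip by vertical columns to obtain the $E$/$W$ boundaries. Since $G$ is planar and $H$ is its outer face, we can regard $G$ as drawn in a topological disk bounded by $H$, which lets us talk unambiguously about directions (north/south/east/west) throughout.

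For the horizontal structure, I would start from the cycle $H$ and iteratively refine it using the greedy $\varepsilon$-short-path procedure that underlies most BKM-style constructions. Concretely, whenever there is a current horizontal path $P$ (initially $H$, or later a path obtained from a previous refinement) and two vertices $x,y \in P$ whose subpath $P[x,y]$ violates $\varepsilon$-shortness, I replace $P[x,y]$ by a shortest $xy$-path in $G$ and continue; the newly inserted path becomes a new $N$-boundary of a horizontal slab, with the old piece of $P$ above it. This produces, in the end, a nested family of horizontal paths whose top copies are $0$-short (since they are genuine shortest paths in $G$) and whose bottom copies have all their proper subpaths $\varepsilon$-short, yielding properties~1--3 of the mortar graph. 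The standard length analysis here, which I would reproduce, charges the total length of these horizontal paths to $\alpha_0\cdot \OPT$ coming from $H$ plus a geometric series in $1/(1+\varepsilon)$ coming from the improvements; this is what produces the $(2\alpha_0+1)\varepsilon^{-1}$ factor in $\alpha$.

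Given the horizontal decomposition, I would then slice each horizontal slab by vertical shortest paths from $S$ up to $N$ (the columns), anchored at the already-constructed division vertices on the bottom path. Keeping every path would blow up the length, so I take only every $\kappa$-th of them, declaring those the supercolumns and designating the $s_0,\dots,s_k$ from property~4 exactly as the anchors of the selected supercolumns. The choice $\kappa = \alpha_0 \varepsilon^{-2}(1+\varepsilon^{-1})$ is calibrated so that the total length of the supercolumns across all slabs is at most $\varepsilon\cdot \OPT$ (by a standard averaging argument: among any $\kappa$ consecutive columns, at least one is short compared to the local $N$-length), and simultaneously so that property~4, which controls how close each vertex of $S$ is to the nearest anchor $s_i$ relative to its $B$-distance to $N$, follows from the $\varepsilon$-shortness of the subpaths of $S$ established in the first stage.

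The main obstacle, and the step where I expect the real work to be, is the combined length accounting: showing that the iterative horizontal refinement terminates with total length at most $\alpha\cdot \OPT$ while simultaneously guaranteeing property~4 once the columns are added. The delicate point is that both the $\varepsilon$-shortness of $S$-subpaths and the spacing constant $\kappa$ must be chosen consistently so that the local distance inequality $\dist_S(x,s_i) \le \varepsilon\,\dist_B(x,N)$ holds for \emph{every} brick, without inflating the total length beyond $\alpha\cdot \OPT$; this is the reason for the coupled form of the constants $(\alpha,\kappa)$ in the statement. For the $O(n\log n)$ runtime, I would argue that each shortest-path insertion can be implemented with a single-source Dijkstra call on residual subgraphs, that the refinement rounds form a laminar family whose total work telescopes, and that a careful data-structural implementation (as in \cite{bkm-nasst-09}) avoids recomputing shortest paths from scratch; the final bound then follows because the selection of supercolumns is a linear-time post-processing step once all columns are computed.
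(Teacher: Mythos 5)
The paper does not prove Theorem~\ref{T:BKM}; it cites it verbatim from Borradaile, Klein and Mathieu~\cite{bkm-nasst-09}, via the version in Borradaile, Demaine and Tazari~\cite[Theorem~2]{bdt-ptass-13}. So there is no ``paper's own proof'' to line your attempt up against --- you are reconstructing an imported black box, which is a reasonable thing to try, but it changes what a fair evaluation looks like.

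Your sketch does capture the overall shape of the BKM mortar-graph construction: build a nested family of horizontal paths with the $N$/$S$ shortness properties, cut the resulting strips by columns, and keep every $\kappa$-th column to bound total supercolumn length by an averaging argument. But the content that makes the theorem a theorem rather than a plan --- the length accounting that yields precisely $\alpha = (2\alpha_0+1)\varepsilon^{-1}$, the derivation of $\kappa = \alpha_0\varepsilon^{-2}(1+\varepsilon^{-1})$, the verification that property~4 of a mortar graph actually holds for every brick with these constants, and the $O(n\log n)$ runtime --- is explicitly deferred. You write that ``the main obstacle, and the step where I expect the real work to be, is the combined length accounting,'' which is an honest and accurate self-assessment, but it means the proposal is an outline rather than a proof. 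In addition, your iterative repair loop (``whenever $P[x,y]$ violates $\varepsilon$-shortness, replace it by a shortest $x$--$y$ path'') is not obviously terminating with a bounded total length: replacing one subpath can create new violations elsewhere, and the geometric-series charging you gesture at requires a fixed global structure (in BKM this comes from a single shortest-path tree and a careful level-by-level strip construction), not ad hoc local repairs. As it stands, the argument neither establishes termination nor the stated constants, so there is a genuine gap precisely where you predicted one.
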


\begin{proof}[Proof of Theorem~\ref{T:mortar}] The proof of this
  theorem follows closely the one in~\cite{bdt-ptass-13}. The main
  difference is that the value of OPT is different, therefore we use a
  different cutting strategy from the start.

 We first compute an $O(\log^2 g)$ approximation of the optimal
 cut graph using the algorithm in of Erickson and
 Har-Peled~\cite{eh-ocsd-04} and cut along it, to obtain a planar graph
 $H$. We can now apply Theorem~\ref{T:BKM} to $H$ to obtain a mortar
 graph of $H$, and it is easy to verify that it is a mortar graph of
 $G$ as well. The theorem follows from the bounds of
 Theorem~\ref{T:BKM} and the $O(\log^2 g)$ approximation, the
 bottleneck of the complexity being the computation of the
 approximation of the cut graph.
\end{proof}

\section{Structure Theorem}\label{S:struc}
\label{sect:struct_thm}

\begin{figure}
  \centering
  \includegraphics[scale=0.4]{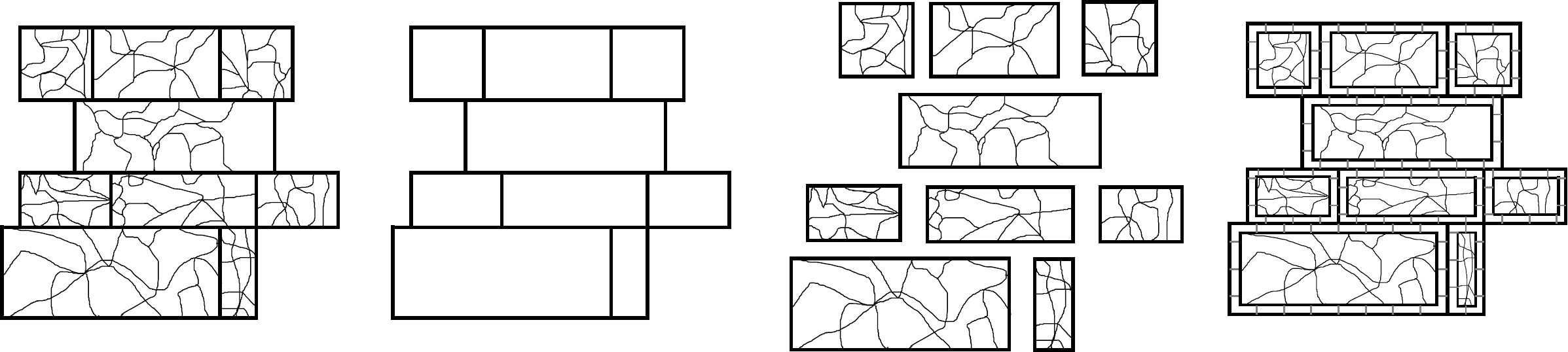}
  \caption{The different stages of the brick decomposition of a graph $G$, the mortar graph, the set
  of bricks and the graph $B^+(MG,\theta)$.}

  \label{fig:bricks}
\end{figure}

In this section, we prove the structure theorem, which shows that
there exists an $\varepsilon$-approximation to the optimal cut graph
which only crosses the mortar graph at a small subset of vertices
called \textit{portals}.

In order to state this theorem, following the literature, we define a
\textit{brick-copy} operation $B^+$ as follows. For each brick $B$, a
subset of $\theta$ vertices is chosen as \textit{portals} such that
the distance along $\partial B$ between any vertex and the closest
portal is at most $\ell(\partial B)/\theta$. For every brick $B$,
embed $B$ in the corresponding face of $MG$ and connect every portal
of $B$ to the corresponding vertex of $MG$ with a zero-length
\textit{portal edge}; this defines $B^+(MG,\theta)$. The edges
originating from $MG$ are called the \textit{mortar edges}.

We note that by construction, $B^+(MG,\theta)$ embeds on the plane in such a way
that every brick of $B^+(MG,\theta)$ is included in the corresponding brick
of $MG$. Furthermore, every vertex of $G$ corresponds to a vertex of
$B^+(MG,\theta)$ by mapping the insides of bricks to the insides of
bricks in $B^+(MG,\theta)$, and the mortar graph to itself,
cf. Figure~\ref{fig:bricks}. We denote this map by $i:V(G) \rightarrow V(B^+(MG,\theta))$.

Moreover, we contract the $E$ and $W$ boundaries of each brick of
$B^+(MG,\theta)$ and their copies in the mortar graph.
Since the sum of the length of the $E$ and $W$ boundaries is at most $\varepsilon OPT$,
any solution of length $\ell$ in $B^+(MG,\theta)$ going through a vertex resulting from
a contraction can be transformed into a solution of length at most $\ell + 2 \varepsilon OPT$ in
$B^+(MG,\theta)$ where no edge is contracted.
The structure theorem is then the following:

\begin{theorem}\label{T:structheorem}
Let $G$ be a graph embedded on $\Sigma$ of genus $g$, and $\varepsilon
>0$. Let $MG(G,\varepsilon)$ be a corresponding mortar graph of length
at most $\alpha OPT$ and supercolumns of length at most $\varepsilon
OPT$ with spacing $\kappa$. There exists a constant $\theta(\alpha,
\varepsilon,\kappa)$ depending polynomially on $\alpha, 1/\varepsilon$
and $\kappa$ such that:
\[\OPT(B^+(MG,\theta)) \leq (1 + c\varepsilon) \OPT.\]

\end{theorem}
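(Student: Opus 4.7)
The plan is to start with a minimal optimal cut graph $C^*$ of $G$ and to build, by rerouting inside each brick, a cut graph $C'$ of $B^+(MG,\theta)$ of length at most $(1+c\varepsilon)\OPT$. A key structural observation is that for every brick $B$ the restriction $C^* \cap B$ is a forest: since $B$ is a disk in $\Sigma$, any cycle of $C^* \cap B$ would bound a disk in $\Sigma$, and by minimality of $C^*$ one of its edges could be removed while retaining the cut-graph property, a contradiction. This plays the role that the tree structure of the Steiner tree plays in the spanner analysis of Borradaile, Klein and Mathieu.

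To construct $C'$, I would proceed brick by brick. Let $v_1, \ldots, v_k$ be the vertices of $C^* \cap B$ lying on $\partial B$; for each, let $p_i$ be the portal of $B$ closest to $v_i$ along $\partial B$, which is at distance at most $\ell(\partial B)/\theta$ by definition of the portals. Then $C'$ is assembled by: (i) retaining the mortar edges of $C^*$; (ii) adding, on the mortar side, the short $\partial B$-path from $v_i$ to $p_i$; (iii) re-embedding the brick-interior forest $C^* \cap B$ inside the brick copy in $B^+(MG,\theta)$; (iv) adding, inside the brick copy, the short $\partial B$-path from $v_i^B$ to $p_i^B$; and (v) using the zero-length portal edges $p_i^B \to p_i$. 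The contraction of the $E$ and $W$ boundaries performed just before the theorem statement, together with viewing this construction as a homotopy of $C^*$ within the disk $B$, should imply that $\Sigma \setminus C'$ remains a single disk, so $C'$ is indeed a cut graph of $B^+(MG,\theta)$.

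For the length analysis, each rerouting contributes at most $2\ell(\partial B)/\theta$, for a naive per-brick bound of $O(k\,\ell(\partial B)/\theta)$. Since $k$ can be large this does not directly sum to $O(\varepsilon\OPT)$, so I would follow the BKM template (extended by Borradaile, Demaine and Tazari to bounded genus) and charge the rerouting cost separately on each of the four boundaries: attachments on the $E$ and $W$ supercolumns are charged to the total supercolumn length, bounded by $O(\varepsilon \OPT)$ via Theorem~\ref{T:mortar}; attachments on the $0$-short path $N$ use the shortest-path property to control both their number and their individual rerouting cost; and attachments on $S$ are rerouted within the $\varepsilon$-short proper subpaths demarcated by $s_0, \ldots, s_\kappa$, paying only an $\varepsilon$-fraction of those subpath lengths. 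Taking $\theta$ polynomial in $\alpha$, $1/\varepsilon$ and $\kappa$ then gives the claimed length bound.

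The main obstacle I expect is the combined topological and combinatorial verification that $C'$ is a genuine cut graph of $B^+(MG,\theta)$: unlike for Steiner trees, a subgraph with the right connectivity is not automatically a cut graph, and extra care is required when the same boundary vertex of $\partial B$ is rerouted from multiple trees of $C^* \cap B$, or is shared between two adjacent bricks that reroute it towards different portals. These local conflicts must be resolved coherently, and the face structure produced by the simultaneous rerouting on both the mortar side and the brick-copy side has to be tracked carefully to confirm that a single disk is obtained as the complement.
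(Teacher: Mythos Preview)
Your plan keeps the original forest $C^*\cap B$ inside each brick and only adds short detours to portals. That is precisely where the argument breaks. The cost of rerouting is $\sum_B |\{\text{joining vertices of }C^*\cap B\}|\cdot \ell(\partial B)/\theta$, and the number of joining vertices per brick is \emph{not} bounded a priori; your informal ``charging on $N$, $S$, $E$, $W$'' does not fix this, because no charging scheme can make this sum small without first modifying the forest. What the paper does instead is to invoke the BKM structural theorem for bricks (Theorem~\ref{T:strucbricks}): for each brick it \emph{replaces} $F=C^*\cap B$ by a new forest $\widetilde F$ which (i) preserves connectivity between vertices of $N\cup S$, (ii) has at most $\gamma(\kappa,\varepsilon)$ joining vertices with $N\cup S$, and (iii) satisfies $\ell(\widetilde F)\le(1+c\varepsilon)\ell(F)$. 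Only after this replacement is the portal detour cost bounded by $\gamma\cdot\sum_B\ell(\partial B)/\theta\le 2\gamma\alpha\OPT/\theta$, and choosing $\theta=\gamma\alpha\varepsilon^{-1}$ gives the $(1+c\varepsilon)\OPT$ bound. The hard work you allude to (``the shortest-path property to control both their number and their individual rerouting cost'') is exactly the content of that theorem, and it requires restructuring the forest, not just a clever accounting.

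This replacement then invalidates your topological verification: once $C^*\cap B$ is swapped for $\widetilde F$, the result is no longer a homotopic perturbation of $C^*$, so the ``view it as a homotopy inside the disk $B$'' argument does not apply. The paper handles this differently: Proposition~\ref{P:strucBKM} records only that the modified graph $\widehat C$ still connects, inside the planar disk $D$ obtained from the initial cut, any pair of boundary vertices of $D$ that $C^*$ connected. The proof of Theorem~\ref{T:structheorem} then argues by contradiction: a non-contractible cycle $\gamma$ avoiding $\widehat C$ would yield, back in $G$, a cycle $\gamma'$ that must cross some maximal subpath of $C^*$ in $D$ an odd number of times, forcing interlocking endpoints on $\partial D$; the preserved connectivity then makes $\gamma$ cross $\widehat C$ after all. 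So the correct route is: cite Theorem~\ref{T:strucbricks} as a black box to get bounded joining vertices, then prove the cut-graph property via connectivity preservation rather than homotopy.
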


The proof of this theorem essentially consists in plugging in the
structure theorem of~\cite{bkm-nasst-09} and verifying that it fits. Let us
first recall the structural theorem of bricks~\cite{bkm-nasst-09}. For a
graph $H$ and a path $P \subseteq H$, a \textit{joining vertex} of $H$ with
$P$ is a vertex in $P$ that lies on an edge of $H \setminus P$.


\begin{theorem}\cite[Theorem~10.7]{bkm-nasst-09}\label{T:strucbricks}
Let $B$ be a brick, and $F$ be a set of edges of $B$. There is a forest
$\widetilde{F}$ in $B$ with the following properties:
\begin{enumerate}\cramped
\item If two vertices of $N \cup S$ are connected by $F$, they are also
  connected by $\widetilde{F}$,
\item The number of joining vertices of $\widetilde{F}$ with both $N$ and
  $S$ is bounded by $\gamma(\kappa, \varepsilon)$,
\item $\ell(\widetilde(F))\leq \ell(F)(1+c \varepsilon)$.

In the above, $\gamma(\kappa,\varepsilon)=o(\varepsilon^{-2.5}\kappa)$ and $c$ is a
fixed constant.
\end{enumerate}
\end{theorem}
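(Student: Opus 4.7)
The plan is to follow the Borradaile--Klein--Mathieu brick structure argument from~\cite{bkm-nasst-09}. The key geometric feature to exploit is that a brick behaves like a narrow horizontal strip: $N$ is a shortest path in $B$ (it is $0$-short), every proper subpath of $S$ is $(1+\varepsilon)$-close to a $B$-geodesic, and the anchors $s_0,\ldots,s_k$ (with $k\le\kappa$) carve $S$ into intervals whose $S$-length is at most $\varepsilon$ times the $B$-distance to $N$. Together these say that $N$ is a free ``top highway'' and each $S$-interval $[s_i,s_{i+1})$ is essentially collapsible to its anchor $s_i$ at multiplicative cost $\varepsilon$.

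First I would discard every connected component of $F$ with no joining vertex on $N\cup S$, since such components play no role in condition~(1) and removing them only improves condition~(3). For each remaining component $C$ and each joining vertex $x\in C\cap S$ lying in $[s_i,s_{i+1})$, I would snap $x$ to the anchor $s_i$ by appending the $S$-subpath between them, which costs at most $\varepsilon\,\dist_B(x,N)$. A careful charging argument, which is the heart of the BKM proof, shows that the total snapping cost is bounded by $O(\varepsilon)\,\ell(F)$; roughly, each $S$-joining vertex at depth $d$ from $N$ must be supported by enough nearby mass in $F$ thanks to $\varepsilon$-shortness of $S$-subpaths. After snapping, every surviving component touches $S$ only among the $\kappa+1$ anchor vertices. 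For joining vertices on $N$, I would merge them within each component by routing along $N$: since $N$ is $0$-short, any sub-routing of $C$ between two $N$-vertices can be replaced by the corresponding $N$-subpath at no extra cost, collapsing the $N$-attachments of $C$ to a single interval of $N$.

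Next, I would take a spanning forest of the modified edge set, which preserves condition~(1) (the partition of $N\cup S$ into connected components is unchanged) and only decreases length and joining-vertex count, yielding the candidate $\widetilde{F}$. To obtain the bound $\gamma(\kappa,\varepsilon)=o(\varepsilon^{-2.5}\kappa)$ on the total number of joining vertices with $N\cup S$, one further counts the components that survive the above reductions: each such component contributes at most $\kappa+1$ $S$-anchors and a controlled number of $N$-attachments that are charged to its internal mass via $\varepsilon$-shortness, and a pigeonhole over the $O(\varepsilon^{-2})$ meaningful height scales $\dist_B(\cdot,N)$ combined with the $O(\kappa)$ horizontal anchor positions produces the claimed rate.

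The hard part is this last counting step: the naive bound is $O(\kappa)$ per surviving component with no a priori control on the number of components, and one must avoid double-counting when distinct components share anchors $s_i$ or run in parallel along $N$. Getting the sharp $o(\varepsilon^{-2.5}\kappa)$ rate therefore requires a delicate charging scheme linking $N$-attachments and $S$-anchors through the $\varepsilon$-shortness property and carefully slicing $B$ along the anchor-induced columns; this combinatorial accounting is where the BKM argument does its real work, and where my proof would have to reproduce the technical part of~\cite[Theorem~10.7]{bkm-nasst-09} faithfully.
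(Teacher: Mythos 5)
This theorem is not proved in the paper at all: it is imported verbatim as a black-box citation to~\cite[Theorem~10.7]{bkm-nasst-09}, and the paper's role is only to \emph{use} it in Proposition~\ref{P:strucBKM}. There is therefore no ``paper's own proof'' to compare your sketch against; what you have written is a reconstruction of the Borradaile--Klein--Mathieu argument that the present authors deliberately treat as external.

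Taken on its own terms, your outline identifies the right ingredients (discard components not meeting $N\cup S$, snap $S$-attachments to anchors using property~4 of bricks, collapse $N$-attachments using $0$-shortness of $N$, pass to a spanning forest), but it does not actually prove the theorem, and you say so yourself: you explicitly defer the central counting step to BKM. That deferred step is the whole theorem. The statement's content is exactly the quantitative bound $\gamma(\kappa,\varepsilon)=o(\varepsilon^{-2.5}\kappa)$ on joining vertices, and the simple ``snap, merge, take a spanning forest'' pipeline you describe does not obviously produce it: after snapping, a single surviving component can still attach to $\Theta(\kappa)$ anchors and to many points of $N$, and nothing in your sketch controls the \emph{number of surviving components} or rules out double-counting of anchors shared between components. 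BKM's actual argument is structured differently---each component is replaced by a tree with a carefully chosen ``trunk'' near $N$ and a bounded number of ``branches'' to $S$, and the charging is done along this trunk/branch decomposition, not by a post-hoc pigeonhole over height scales. Your paraphrases of the brick properties are also slightly loose (property~4 bounds $\dist_S(x,s_i)$ for $x\in S[s_i,s_{i+1})$, not the $S$-length of the whole interval), which would matter if you tried to carry out the charging. In short: the high-level plan is faithful to BKM, but the proof has a genuine gap precisely at the step that makes the theorem nontrivial, so as a standalone argument it does not establish condition~(2).
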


From this we can deduce the following proposition.

\begin{proposition}\label{P:strucBKM}
Let $C$ be a subgraph of $G$ of length $OPT$. There exists a constant
$\theta(\alpha, \varepsilon, \kappa)$ depending polynomially on
$\alpha,1/\varepsilon$ and $\kappa$ and a subgraph
$\widehat{C}$ of $B^+(MG,\theta)$ with the following properties:
\begin{itemize}\cramped
\item $\ell(\widehat{C}) \leq (1+\tilde{c}\varepsilon) \ell(C)
  =(1+\tilde c\varepsilon) OPT$, where $\tilde c$ is a fixed constant.
\item If we denote by $D$ the closed disk on which $MG$ has been
  constructed, for any two vertices $s,t \in \partial D$ that are
  connected by $C$ in $D$, $i(s)$ and $i(t)$ are connected by $\widehat{C}$ in $D$ as well.
\end{itemize}
\end{proposition}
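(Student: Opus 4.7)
The plan is to apply the brick structure theorem (Theorem~\ref{T:strucbricks}) to the restriction of $C$ to each brick, and then route the resulting joining vertices onto the mortar graph through the portals. The two main tasks are (i) to bound the total length of the resulting $\widehat C$ by summing per-brick guarantees and rerouting costs, and (ii) to verify that connectivity between $s, t \in \partial D$ is preserved.

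Write $C_M := C \cap MG$ for the portion of $C$ in the mortar graph, and for each brick $B$ let $F_B$ be the subgraph of $C$ whose edges lie in the interior of $B$ (with endpoints on $\partial B$ allowed), so that $\ell(C_M) + \sum_B \ell(F_B) = \OPT$. Applying Theorem~\ref{T:strucbricks} to each $F_B$ yields a forest $\widetilde{F}_B \subseteq B$ of length at most $(1+c\varepsilon)\ell(F_B)$ with at most $\gamma(\kappa,\varepsilon)$ joining vertices with $N \cup S$, preserving the connectivity of every $N \cup S$-pair already connected by $F_B$. I embed each $\widetilde{F}_B$ into the corresponding brick of $B^+(MG,\theta)$, keep $C_M$ in the mortar copy, and for each joining vertex $v$ of $\widetilde{F}_B$ with $N \cup S$ I add to $\widehat C$ the subpath of $\partial B$ from $v$ to its nearest portal $p(v)$, which by the definition of portals has length at most $\ell(\partial B)/\theta$; the zero-length portal edge then attaches $v$ to the corresponding mortar vertex.

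Summing over bricks, the total rerouting cost is at most $\sum_B \gamma\,\ell(\partial B)/\theta \leq 2\gamma\alpha\,\OPT/\theta$, since each mortar edge lies on at most two brick boundaries; choosing $\theta = 2\gamma\alpha/\varepsilon$, which is polynomial in $\alpha$, $1/\varepsilon$ and $\kappa$, makes this at most $\varepsilon\,\OPT$, so $\ell(\widehat C) \leq \ell(C_M) + (1+c\varepsilon)\sum_B \ell(F_B) + \varepsilon\,\OPT \leq (1+(c+1)\varepsilon)\,\OPT$. For the connectivity claim, I decompose a $C$-path from $s$ to $t$ in $D$ into alternating mortar- and brick-segments: the mortar-segments lie in $C_M \subseteq \widehat C$, and for each brick-segment inside $B$ with endpoints on $N \cup S$, these endpoints stay connected inside $\widetilde{F}_B$ by Theorem~\ref{T:strucbricks} and are glued back to the mortar graph through the portal rerouting. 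The step I expect to require the most care is a brick-segment whose endpoint lies on $E \cup W$, since Theorem~\ref{T:strucbricks} gives no guarantee there; the remedy is that the $E$ and $W$ boundaries are contracted in $B^+(MG,\theta)$, so such an endpoint collapses to a single vertex already attached to the mortar graph, closing the path. Hence $i(s)$ and $i(t)$ are connected in $\widehat C$.
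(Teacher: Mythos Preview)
Your approach is the same as the paper's: apply Theorem~\ref{T:strucbricks} brick by brick, reroute the joining vertices to portals, and verify connectivity by decomposing an $s$--$t$ path into mortar- and brick-segments. Your explicit treatment of the $E\cup W$ case via the contraction is a nice touch that the paper leaves implicit. However, your construction of $\widehat{C}$ omits one ingredient that the paper includes, and this creates a genuine gap in the connectivity argument.

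For each joining vertex $v$ of $\widetilde{F}_B$ you add (i) the brick-boundary path from $v$ to its nearest portal $p(v)$ and (ii) the portal edge. The paper additionally adds (iii) the \emph{mortar copy} of that same boundary path, i.e., the path from $i(p(v))$ back to $i(v)$ inside $MG$. Without (iii), the brick forest $\widetilde{F}_B$ is attached to the mortar part of $\widehat{C}$ only at the mortar portals $i(p(v))$. But a mortar-segment of the original path $P$ ends at $i(u)$, where $u$ is the actual entry point into $B$; in general $i(u)$ is not a portal, and there is no reason for the mortar subpath from $i(u)$ to $i(p(v))$ to lie in $C_M$. Thus your two pieces---the mortar-segment terminating at $i(u)$ and the brick forest hooked in at $i(p(v))$---sit on different mortar vertices with nothing in $\widehat{C}$ joining them, so the sentence ``glued back to the mortar graph through the portal rerouting'' does not yet yield a connected $i(s)$--$i(t)$ walk. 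Adding (iii) closes each such gap by providing the mortar path $i(u)\to i(p(v))$ (more precisely $i(v)\to i(p(v))$ for each joining vertex $v$), at the cost of at most doubling the rerouting length; this is the reason for the leading factor~$2$ in the paper's portal-cost estimate, and it is absorbed into the choice of~$\theta$.
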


\begin{proof}
We recall that the interior of every brick $B$ of $G$ can naturally be embedded in
the corresponding brick of $B^+(MG,\theta)$, therefore, for every
brick $B$, we can identify $C \cap B$ with a subgraph of
$B^+(MG,\theta)$. We define $\widehat{C}$ as follows: 
\begin{itemize}
\item For every edge of $C$ which is an edge of the mortar graph $MG$,
  we add the corresponding mortar edge in $B^+(MG,\theta)$ to $\widehat{C}$.
\item In the inside of every brick $B$, if $F$ is the
  forest induced by $C \cap B$, we define $\widehat{C} \cap B$ to be
  the forest defined by $\widetilde{F}$ (as defined in Theorem
  \ref{T:strucbricks})
\item Finally, for every brick $B$, let $Q(B)$ denote the set of
  joining vertices of $\widetilde{C}$ with $N \cup S$. For every
  vertex of $Q(B)$ in a brick $B$, we add to $\widetilde{C}$ the path to
  its closest portal, the copy of this path in the mortar graph, as well as the corresponding portal edge.
\end{itemize}

  We start by proving the first property. The
  length of $\widehat{C}$ restricted to $B$ is at most
  $(1+c\varepsilon)$ times the length of $C$ restricted to $B$.  The
  length incurred by the portal operation is
  \[\sum\limits_{B} \sum\limits_{q \in Q(B)} \le 2\sum\limits_{B}  |Q(B)| \frac{\ell(\partial B)}{\theta} \le 2\sum\limits_{B} \gamma(\varepsilon) \frac{\ell(\partial B)}{\theta}.\]
  By defining $\theta = \gamma(\varepsilon) \alpha\varepsilon^{-1}$
  portals, we obtain that the length of $\widehat{C}$ is at most
  $(1+c'\varepsilon) OPT$ for some universal constant $c'$, since the
  length of the mortar graph is at most $\alpha OPT$.

  We now prove the second property.  Consider two vertices $u,v$ on
  $\partial D$ that are connected in $C$ and let $P$ be the path in
  $C$ connecting them.  Recall that any vertex that belongs to
  $\partial D$ belongs to the mortar graph.  We decompose $P$ into
  subpaths crossing at most one brick and whose extremities lie on the
  mortar graph.  For any such subpath $s = \{s, \ldots, t\}$ in a brick $B$, we
  show that there exists a path in $\widehat{C}$ connecting $i(s)$
  to $i(t)$.  We consider the set of vertices $s_B := Q(B) \cap s$. By
  definition of $\widehat{C}$, $i(s)$ and $i(t)$ are connected
  to vertices of $s_B$.  Property~1 of Theorem \ref{T:strucbricks}
  implies that for any couple of vertices of $Q(B)$, if they are
  connected in $C$ they are also connected in
  $\widehat{C}$. Therefore, we conclude that $i(s)$ and $i(t)$ are
  connected and therefore $u$ and $v$ are also connected.

\end{proof}

We now have all the tools to prove the structure theorem.

\begin{proof}[Proof of Theorem~\ref{T:structheorem}]
Let $C$ be an optimal cut graph of $(\Sigma,G)$. We apply
Proposition~\ref{P:strucBKM} to $C$, it yields a subgraph
$\widehat{C}$ of $B^+(MG,\theta)$ of length $\ell(\widehat{C})\leq
(1+\varepsilon) \ell(C)$. We claim that this graph $\widehat{C}$
contains a cut graph of $\Sigma$.

Suppose on the contrary that there exists a non-contractible cycle
$\gamma$ in $(\Sigma,B^+(MG,\theta))$ which does not cross $\widehat{C}$. This cycle
$\gamma$ corresponds to a cycle $\gamma'$ in $(\Sigma,G)$ by contracting
portal edges, and since $C$ is a cut graph, there exists a maximal
subpath $P$ of $C$ restricted to $D$ and a maximum subpath $P'$ of
$\gamma'$ in $D$ such that $P'$ crosses $P$ an odd number of times,
otherwise, by flipping bigons we could find a cycle homotopic to
$\gamma'$ not crossing $C$. Denote by $(s,t)$ and $(s',t')$ the
intersections of $P$ and $P'$ with $\partial D$. Then, without loss of
generality, $s$, $s'$, $t$ and $t'$ appear in this order on $\partial
D$. Furthermore, the vertices $i(s)$ and $i(t)$ in $B^+(MG,\theta)$
are connected by $\widehat{C}$ by Proposition~\ref{P:strucBKM}, since
$s$ and $t$ are connected by $C$. Therefore, $\gamma$ crosses
$\widehat{C}$, and we reach a contradiction.
\end{proof}

\section{Algorithm}\label{S:algorithm}

We now explain how to apply the spanner framework of Borradaile et al.
in~\cite{bdt-ptass-13} to compute an
approximation of the optimal cut graph. We start by computing the
optimal Steiner tree, for each subset of the portals in every brick by
using the algorithm or Erickson, Monma, and Veinott ~\cite{emv-ssmmc-87}, and then 
take the union of all these trees over all bricks, plus the edges of the 
mortar graph. As this algorithm runs in time $O(nk^3)$, this step takes time 
$O_{g,\varepsilon}(n)$.
This defines the graph $G_{span}$, which by construction has length $\leq
f(g,\varepsilon) OPT$, where $f(g,\varepsilon)=O(2^{\theta})=2^{O(\log^2g) poly(1/\varepsilon)}$, and contains a $(1+\varepsilon)$
approximation of the optimal cut graph by the structure theorem.

We will use the following theorem of Demaine et al.~\cite[Theorem
  1.1]{dhm-aacd-10} (the complexity of this algorithm can be improved to $O_g(n)$~\cite{dhk-cthmfg-11}).

\begin{theorem}\label{T:tw}
  For a fixed genus $g$, any $k \geq 2$ and any graph $G$ of genus at
  most $g$, the edges of $G$ can be partitioned into $k$ sets such that
  contracting any one of the sets results in a graph of tree-width at most
  $O(g^2k)$. Furthermore, the partition can be found in time
  $O(g^{5/2}n^{3/2} \log n)$ time.

\end{theorem}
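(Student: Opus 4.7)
The plan is to adapt the classical Baker--Eppstein layering paradigm for planar contraction decompositions to the bounded-genus setting. First I would compute a BFS spanning tree $T$ of $G$ from an arbitrary root $r$, giving BFS layers $L_0, L_1, L_2, \ldots$. For each residue class $j \in \{0, \ldots, k-1\}$, let $S_j$ denote the set of edges whose deeper endpoint lies in a layer $L_i$ with $i \equiv j \pmod{k}$. These sets partition $E(G)$ by construction, so what remains is to bound the tree-width of $G/S_j$ for each fixed $j$.

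The heart of the argument is to show that $G/S_j$ has tree-width $O(g^2 k)$. The graph $G/S_j$ decomposes naturally into a sequence of ``slabs'' of $k$ consecutive BFS layers, glued along the layers $L_i$ with $i \equiv j \pmod{k}$; within each slab every vertex lies at BFS distance at most $k$ from the interface layer. The key structural input is Eppstein's theorem that a graph of Euler genus at most $g$ whose BFS radius from some vertex is at most $r$ has tree-width $O(g r)$; applied slab by slab, each slab inherits genus at most $g$ and admits a tree-decomposition of width $O(g k)$. Stitching the slab-decompositions together along the interface layers will require one additional factor of $g$, because on a surface of positive genus the interface between two adjacent slabs inherits the topology of $\Sigma$ rather than being a disjoint union of cycles bounding disks as in the planar case; absorbing this topological complexity into the shared bags yields the final $O(g^2 k)$ bound.

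The main obstacle I expect is precisely this stitching step: in the planar case the interface between two consecutive slabs is a collection of cycles bounding disks, so the per-slab decompositions glue trivially along a path-shaped backbone, but on a genus-$g$ surface one needs either to further decompose each interface using short non-contractible curves, or to argue directly that a ``stacked'' tree-decomposition obtained by enlarging each interface bag by $O(g k)$ vertices is still a valid tree-decomposition of $G/S_j$. For the algorithmic claim, the BFS layering and the edge partition itself are linear time; the stated $O(g^{5/2} n^{3/2} \log n)$ bound will be dominated by a topological preprocessing stage, namely the computation of a genus-$g$ embedding of $G$ together with the short non-contractible curves used to organize the slab interfaces, using Mohar's embedding algorithm as the workhorse.
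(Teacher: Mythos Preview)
The paper does not prove Theorem~\ref{T:tw}. It is quoted verbatim as Theorem~1.1 of Demaine, Hajiaghayi, and Mohar~\cite{dhm-aacd-10} and invoked as a black box in Step~2 of the algorithm in Section~\ref{S:algorithm}; no argument for it appears anywhere in the text. There is therefore nothing in-paper to compare your proposal against.

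If you want a word on the proposal itself relative to the cited source: the high-level architecture---BFS layering modulo $k$, followed by Eppstein's diameter/treewidth bound for bounded-genus graphs---is indeed the backbone of the argument in~\cite{dhm-aacd-10}. However, the ``stitching'' obstacle you single out is real and is exactly where your sketch stops being a proof: after contracting $S_j$ the interface layers do not become small vertex sets that can simply be thrown into shared bags, and building per-slab decompositions and gluing them is not how the original argument proceeds. Demaine, Hajiaghayi, and Mohar instead work with the contracted graph globally, controlling its structure via the surface embedding (in particular through the radial/face--vertex incidence structure) so that Eppstein's theorem can be applied without a separate gluing phase; the second factor of $g$ arises inside that analysis rather than from interface topology. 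As written, your plan would require a genuinely new argument at the step you yourself flag as the main obstacle.
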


The four steps of the framework are now the following.

\begin{enumerate}\cramped

\item Compute the spanner $G_{span}$.
\item Apply Theorem~\ref{T:tw} with $k=f(g,\varepsilon) / \varepsilon$, and
  contract the edges in the set of the partition with the least weight. The
  resulting graph $G_{tw}$ has tree-width at most $O(g^2 \varepsilon^{-1}
  f(g,\varepsilon))$.
\item Use the bounded tree-width to compute a cut graph of
  $(\Sigma,G_{tw})$. An algorithm to do this is described in Section~\ref{S:tw}.
\item Incorporate the contracted edges back. By definition, they have
  length at most $f(g,\varepsilon)OPT / k = \varepsilon OPT$. Therefore,
  the final graph we obtain has the desired length. If the
  resulting graph has more than one face, remove edges until we obtain
  a cut-graph.
\end{enumerate}

We now analyze the complexity of this algorithm. The spanner is computed in
time $O_{g,\varepsilon}(n \log n)$. Using~\cite{dhk-cthmfg-11}, the second step takes time $O_{g,\varepsilon}(n)$.
Dynamic programming takes time $O_{g,\varepsilon}(n^3)$ (see thereafter), and the final lifting step
takes linear time. Assuming the dynamic programming step described in
the next section, this proves Theorem~\ref{T:main}.

\section{Computing cut graphs for bounded tree-width}\label{S:tw}

There remains to prove that computing the optimal cut graph of
$(\Sigma,G)$ is fixed parameter tractable with respect to both the
tree-width of $G$ and the genus of $\Sigma$ as a parameter. Out of convenience,
we work with the branch-width instead, which gives the result since they are
within a constant factor~\cite[Theorem~5.1]{rs-gm10otd-91}. As cut
graphs are a topological object, we
will rely on surface-cut decompositions~\cite{rst-dpgs-14}, which are a topological
strengthening of branch decompositions. Note that, for reasons which
will be clear later, our definition is slightly
different from the one of Ru\'e, Sau and Thilikos as it does not rely
on polyhedral decompositions.


Given a graph $G$ embedded in a surface $\Sigma$ of genus $G$, a
\textit{surface-cut decomposition} of $G$ is a branch decomposition
$(T, \mu)$ of $G$ such that for each edge $e \in E(T)$, the vertices
in $\textrm{mid}(e)$ are contained in a set $\mathcal{N}$ of nooses in
$\Sigma$ such that:

\begin{itemize}\cramped
\item $|\mathcal{N}|=O(g)$
\item The nooses in $\mathcal{N}$ pairwise intersect only at subsets of $\textrm{mid}(e)$
\item $\theta(\mathcal{N})=O(g)$
\item $\Sigma \setminus \bigcup \mathcal{N}$ contains exactly two connected
  components, of which closures contain respectively $G_1$ and $G_2$.
\end{itemize}
where $\theta$ is defined as follows: for a point $p$ in $\Sigma$, if we denote by $\mathcal{N}(p)$
the number of nooses in $\mathcal{N}$ containing $p$, and let
$P(\mathcal{N}) = \{p \in \Sigma \mid \mathcal{N}(p) \geq 2 \}$, we define
\[\theta(\mathcal{N})=\sum_{p\in P(\mathcal{N})}\mathcal{N}(p) -1.\]
Ru\'e et al. showed how to compute such a surface-cut
decomposition when the input graph $G$ is embedded
\textbf{polyhedrally} on the surface $\Sigma$:

\begin{theorem}[{\cite[Theorem~7.2]{rst-dpgs-14}}]\label{T:rst}
Given a graph $G$ on $n$ vertices polyhedrally embedded on a surface
of genus $g$ and with $bw(G) \leq k$, one can
compute a surface-cut decomposition of $G$ of width $O(g+k)$ in time $2^{O(k)}n^3$.
\end{theorem}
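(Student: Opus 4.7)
The plan is to reduce the problem to one about the medial graph, where carving-width replaces branch-width and powerful algorithmic tools are available. First, I would form the medial graph $M_G$ of $G$ on $\Sigma$, and invoke the theorem of Fomin and Thilikos (used here in its polyhedral version) which gives $cw(M_G) = \Theta(bw(G))$; under the hypothesis $bw(G) \leq k$ this yields $cw(M_G) = O(k)$. The polyhedral assumption is what makes this equivalence tight in both directions, since it prevents the medial graph from acquiring spurious small separators that do not reflect structural cuts of $G$.

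Next, I would compute a bond carving decomposition of $M_G$ of optimal width in fixed-parameter time. Using the algorithms for graphs of bounded carving-width (in the spirit of Seymour--Thomas, combined with the bounded-genus speedups used elsewhere in the paper), this can be done in $2^{O(k)} n^{O(1)}$ time. Working with bond (rather than arbitrary) carving decompositions is important: the middle sets are minimal edge-separators of $M_G$ whose two sides are each connected, which is precisely the property needed to extract nooses on $\Sigma$ with the required topological regularity.

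The delicate step is converting the bond carving decomposition of $M_G$ into a surface-cut decomposition of $G$. For each internal edge $e$ of the carving tree, the corresponding middle set is a set of edges in $M_G$ crossed by curves on $\Sigma$ that separate the two sides. I would route these curves through the faces of $M_G$ (equivalently, vertices and faces of $G$) to obtain a family of nooses $\mathcal{N}$ on $\Sigma$ passing exactly through the vertices that constitute $\mathrm{mid}(e)$. The polyhedrality hypothesis is then used to bound $|\mathcal{N}|$ and $\theta(\mathcal{N})$ by $O(g)$, using the fact that non-contractible nooses have length at least $3$ and $3$-connectivity forbids the pathological small curves that would inflate these quantities; a cut-and-paste argument in $\mathbb{Z}_2$-homology on $\Sigma$ then yields the two-component property.

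The main obstacle is exactly this last conversion: checking simultaneously all four conditions defining a surface-cut decomposition, in particular that the nooses pairwise intersect only along $\mathrm{mid}(e)$ and that $\Sigma \setminus \bigcup \mathcal{N}$ has exactly two components. This is where the polyhedral hypothesis is essential and cannot, without extra work, be removed; it is precisely for this reason that the present paper either appeals to Inkmann's alternative proof or develops the workaround of adding medial edges and triangulating to enforce polyhedrality at the bounded-tree-width stage.
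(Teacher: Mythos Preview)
Your overall strategy---pass to the medial graph $M_G$, use the Fomin--Thilikos relation between $bw(G)$ and $cw(M_G)$, compute a bond carving decomposition of $M_G$, and then convert back to a surface-cut decomposition of $G$---matches the paper's description of the Ru\'e--Sau--Thilikos algorithm almost exactly. (The paper inserts an extra preliminary step: first compute a branch decomposition of $G$, then \emph{transform} it into a carving decomposition of $M_G$, rather than computing the carving decomposition of $M_G$ directly; but this is a minor variation.)

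However, you have misplaced the role of polyhedrality. You assert that the polyhedral hypothesis is needed in the last step, to bound $|\mathcal{N}|$ and $\theta(\mathcal{N})$ when extracting nooses from the bond carving decomposition, and that ``this is where the polyhedral hypothesis is essential.'' The paper says the opposite: polyhedrality is used \emph{only} in the transformation from a branch decomposition of $G$ to a carving decomposition of $M_G$, via the Fomin--Thilikos lemma (Lemma~5.1 of Ru\'e et al.). The conversion from a bond carving decomposition of $M_G$ back to a surface-cut decomposition of $G$ does not require it. This distinction is not cosmetic: the whole point of the paper's Section~5.1 is that replacing the Fomin--Thilikos lemma by Inkmann's theorem (which holds for arbitrary, not just polyhedral, embeddings) suffices to lift the polyhedral assumption from Theorem~\ref{T:rst} entirely, yielding Corollary~\ref{C:sc-decompositions}. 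If polyhedrality were also required in the noose-extraction step, as you claim, Inkmann's theorem alone would not be enough and that corollary would not follow.
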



When the input graph is not polyhedral, Ru\'e et al.
propose a more intricate version of surface-cut decompositions relying
on \textit{polyhedral decompositions}, but it is unclear how to
incorporate these in a dynamic program to compute optimal cut graphs.

Instead, we present two ways to circumvent polyhedral decompositions
and use these surface-cut decompositions directly. The first one
consists of observing that the difficulties involved with computing
surface-cut decompositions of non-polyhedral embeddings can be
circumvented by using a theorem of Inkmann~\cite{i-tpdgs-07}. Since
Inkmann's theorem has, up to our knowledge, not been published outside
of his thesis, and the proof is quite intricate, for the sake of
clarity we also provide a different approach, based on modifying the
input graph to make it polyhedral.

In both cases, we obtain a branch decomposition with a strong
topological structure, which we can then use as a basis for a dynamic
program to compute the optimal cut graph.

\subsection{A simpler version of surface-cut decompositions}

The algorithm~\cite[Algorithm~2]{rst-dpgs-14} behind the proof of Theorem~\ref{T:rst} relies on the
following steps. Starting with a polyhedral embedding of $G$ on a surface, 

\begin{enumerate}\cramped
\item Compute a branch decomposition \textit{branch(G)} of $G$.
\item Transform \textit{branch(G)} into a carving
  decomposition \textit{carv(G)} of $M_G$.
\item Transform \textit{carv(G)} into a \textit{bond}
  carving decomposition \textit{bond(G)} of $M_G$.
\item Transform \textit{bond(G)} into a branch decomposition of $G$.
\end{enumerate}

The second step is the only one where where the polyhedrality of the
embedding is used, as it relies on the following lemma:

\begin{lemma}[{\cite[Lemma~5.1]{rst-dpgs-14}}]
Let $G$ be a polyhedral embedding on a surface $\Sigma$ of genus $g$, and $M_G$ be the
embedding of the medial graph. Then $bw(G)\leq cw(M_G)/2 \leq
6bw(G)+4g+O(1)$, and the corresponding carving decomposition of $M_G$
can be computed from the branch decomposition of $G$ in linear time.
\end{lemma}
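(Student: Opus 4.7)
The plan is to prove the two inequalities separately. The first, $bw(G) \leq cw(M_G)/2$, is a direct combinatorial transfer of the decomposition. The leaves of a carving decomposition of $M_G$ are vertices of $M_G$, which correspond bijectively to edges of $G$, so the same tree $T$ yields a branch decomposition of $G$. For a tree edge $e$ inducing a partition $(A, \bar A)$ of $V(M_G) = E(G)$, a vertex $v \in V(G)$ lies in the branch-decomposition middle set if and only if the edges of $G$ incident to $v$ include some from $A$ and some from $\bar A$. Since these edges are cyclically ordered around $v$ in the embedding, each such $v$ produces at least two pairs of consecutive edges $(\epsilon, \epsilon')$ with $\epsilon \in A, \epsilon' \in \bar A$, i.e.\ at least two edges of $M_G$ incident to a corner at $v$ that cross the carving cut. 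Since distinct middle-set vertices of $G$ contribute disjoint $M_G$-edges, we get $|\textrm{mid}_G(e)| \leq |\partial_{M_G}(A)|/2$, proving the first inequality.

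The reverse direction $cw(M_G)/2 \leq 6\, bw(G) + 4g + O(1)$ is the substantive part and is where polyhedrality enters. Given a branch decomposition of $G$ of width $k$, the middle set $X = \textrm{mid}(e)$ at each tree edge is a vertex separator of $G$ of size at most $k$, and the goal is to replace $X$ by a system of nooses in $\Sigma$ whose total length (measured by the number of edges of $G$ that the nooses cross, which is exactly the number of $M_G$-edges in the induced carving cut) bounds $|\textrm{mid}_{M_G}(e)|$. For each side of the branch-decomposition partition, one constructs nooses that tightly encircle its edges, using the local cyclic order around the vertices of $X$ to decide, at each $v \in X$, how many times each noose visits $v$. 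An Euler-characteristic accounting applied to the subsurface bounded by these nooses bounds the total noose length by $12k + 8g + O(1)$: the linear-in-$k$ term is the planar contribution, while the $8g$ term absorbs the non-contractible nooses that may be needed when the separator $X$ carries non-trivial topology.

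The main obstacle is the topological construction in the harder direction; polyhedrality is essential to avoid two degenerate phenomena. Without $3$-connectedness, a size-$2$ vertex cut can carry a huge amount of edge separation that short nooses cannot capture, so the bound would fail. Without the assumption that every non-contractible noose has length at least $3$, one could hide separator vertices behind trivial nooses of length $1$ or $2$, so a single carving cut could correspond to an arbitrarily long branch-decomposition middle set. Once the nooses are produced consistently at every tree edge of the branch decomposition, the carving decomposition of $M_G$ is read off the same tree, and because every construction is local to the tree edge, the whole transformation runs in linear time. Handling these two degeneracies in the non-polyhedral setting is precisely what Inkmann's alternative proof has to address, and is why the authors prefer to cite the polyhedral statement here and work with it directly.
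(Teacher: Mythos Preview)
The paper does not prove this lemma: it is quoted verbatim as \cite[Lemma~5.1]{rst-dpgs-14} (itself a surface extension of the Fomin--Thilikos planar result \cite{ft-sdppge-07}) and used as a black box. The paper's only engagement with its proof is the remark that polyhedrality is needed precisely here, which motivates the two workarounds that follow (Inkmann's theorem and the ``make it polyhedral'' preprocessing of Section~\ref{sect:making_polyhedral}). So there is no proof in the paper to compare your proposal against.

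On its own merits, your sketch is in the right spirit. The easy direction $bw(G)\le cw(M_G)/2$ is argued correctly: reusing the tree and counting at least two $A/\bar A$ alternations in the rotation at each middle-set vertex is exactly the standard argument. For the hard direction your outline---turn each middle set into a family of nooses and bound their total length by an Euler-characteristic computation---is the shape of the Fomin--Thilikos/Ru\'e--Sau--Thilikos proof, but what you wrote is not yet a proof: you have not said how the nooses are actually constructed from the separator (this is where $3$-connectivity and the noose-length condition are used concretely, not just as obstructions to degeneracy), nor how the Euler-formula count yields the specific constant $12k+8g+O(1)$. Your closing paragraph about why polyhedrality matters and how Inkmann's theorem removes the hypothesis is accurate and matches the paper's narrative, but that is commentary on the lemma's role, not part of its proof.
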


We observe that the following theorem of Inkmann shows that the branch-width of a surface-embedded graph and
the carving-width of its medial graph are tightly related, even for
non-polyhedral embeddings.

\begin{theorem}[{\cite[Theorem~3.6.1]{i-tpdgs-07}}]\label{T:inkmann}
For every surface $\Sigma$ there is a non-negative constant $c(\Sigma)$ such that if
$G$ is embedded on $\Sigma$ with $|E(G)| \geq 2$ and $M_G$ is its medial graph, we
have
$2bw(G) \leq cw(M) \leq 4bw(G)+c(\Sigma).$
\end{theorem}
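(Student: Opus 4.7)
The plan is to prove the two inequalities separately, as they use quite different ideas.

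For the lower bound $2\,bw(G) \leq cw(M_G)$, I would start from a carving decomposition $(T,\mu)$ of $M_G$ and use the natural bijection $V(M_G) \leftrightarrow E(G)$ to reinterpret $(T,\mu)$ as a branch decomposition of $G$ with the same underlying tree. For each tree edge $e$, the carving cut $C_e$ consists of edges of $M_G$ with endpoints on opposite sides, and the branch middle set $X_e$ consists of vertices of $G$ incident to edges on both sides. Each vertex $v \in X_e$ corresponds to a face of $M_G$ of degree $\deg_G(v)$; walking around this face, the side to which consecutive edges of $G$ belong must switch an even number of times and at least twice. Since distinct middle-set vertices correspond to disjoint faces of $M_G$, this gives a $2$-to-$1$ charging of edges of $C_e$ to vertices of $X_e$, yielding $|X_e| \leq |C_e|/2$.

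For the upper bound $cw(M_G) \leq 4\,bw(G) + c(\Sigma)$, the natural strategy is again to reuse the tree of a branch decomposition as a carving decomposition of $M_G$. For each middle-set vertex $v$, one wants to separate $M_G$ locally by a short curve through the face dual to $v$; under polyhedrality the Fomin--Thilikos construction produces such a curve crossing only two edges of $M_G$ per vertex, which is where the factor $4 = 2\cdot 2$ comes from. Without polyhedrality, a middle-set vertex may force many switches between sides (when the two sides interleave in the rotation around $v$), and short non-contractible nooses produce topological cut edges not charged to any middle-set vertex.

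I would handle this by induction on $|E(G)|$. The base case bounds both widths in terms of $\Sigma$ alone. In the inductive step, if $(\Sigma,G)$ is polyhedral I invoke the Fomin--Thilikos lemma (Lemma~5.1 in the excerpt) directly. Otherwise I extract a certificate of non-polyhedrality: either a vertex cut of size at most $2$ (if $G$ is not $3$-connected) or a non-contractible noose meeting $G$ in at most $2$ vertices (if the representativity is less than $3$). I then split the instance along this certificate, possibly adding a constant number of dummy edges to preserve the medial structure, recurse on each smaller piece (either of strictly smaller edge count, or embedded on a surface of strictly smaller Euler genus once the noose is cut), and glue the resulting carving decompositions at a new root edge. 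Each gluing step inflates the carving cut either by an additive constant that can be absorbed into $bw(G)$ (when the split only touches a constant-sized separator) or by an additive constant that must be charged to $c(\Sigma)$ (when the split reduces the Euler genus).

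The main obstacle is precisely this bookkeeping. I must verify that at each recursive step either the Euler genus of the ambient surface strictly drops (so the total topological charge accumulates to at most $O(g)$, contributing to $c(\Sigma)$), or a well-founded combinatorial measure of non-polyhedrality strictly decreases while $\Sigma$ is unchanged and the branch-width and carving-width grow by at most an additive constant. The trickiest sub-case is a length-$2$ non-contractible noose that passes through two vertices of small degree or through a multi-edge, where cutting along the noose requires a careful local modification of $G$ and $M_G$ so that both the branch-width of the pieces and the carving-widths of their medial graphs can be recombined without introducing unbounded overhead.
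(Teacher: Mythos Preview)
The paper does not contain a proof of this theorem. Theorem~\ref{T:inkmann} is quoted verbatim from Inkmann's thesis and used as a black box; the authors explicitly write that ``Inkmann's proof is intricate and has never been published'' and, precisely for that reason, they develop an alternative route in Section~\ref{sect:making_polyhedral} (making the embedding polyhedral by superposing medial graphs and triangulating) so that they can fall back on the Fomin--Thilikos lemma instead of relying on Inkmann's result. There is therefore nothing in the paper to compare your proposal against.

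As for the proposal itself: your argument for the lower bound $2\,bw(G)\le cw(M_G)$ is the standard one and is fine. Your plan for the upper bound, namely an induction that reduces to the polyhedral case via short nooses and small vertex cuts, is a plausible outline and is in the same spirit as how such surface-width results are typically proved, but you have correctly identified that the bookkeeping is the real content. In particular, you have not explained how to ensure that the non-topological reductions (the $2$-vertex-cut case on the same surface) terminate while only contributing a bounded additive term independent of the number of such reductions, nor how the dummy edges you add interact with the medial graph on each side. These are exactly the places where Inkmann's proof is reputed to be delicate, so what you have is a reasonable roadmap rather than a proof.
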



Digging into the proof reveals that $c(\Sigma)=O(g^2)$ for $\Sigma$ of genus
$g$.  The idea is therefore that replacing Lemma~5.1 of Ru\'e et
al. by Theorem~\ref{T:inkmann} allows us to lift the requirement of
polyhedral embedding in their construction. One downside is that this
theorem does not seem to be constructive, and therefore we need an
alternative way to compute the carving decomposition in step 2. This
can be achieved in fixed parameter tractable time with respect to the
carving-width (and linear in $n$) using the algorithm of Thilikos,
Serna and Bodlaender~\cite{tsb-cltasc-00}. In conclusion, we obtain
the following corollary (note that the bottleneck in the complexity is
the same as in the one of Ru\'e et al., which is the transformation
between a carving and a bond carving decomposition).

\begin{corollary}\label{C:sc-decompositions}
Given a graph $G$ on $n$ vertices embedded in a surface of genus $g$
with $\textrm{bw}(G) \leq k$, there exists an algorithm running in
time $O_k(n^3)$ computing a surface-cut decomposition $(T,\mu)$ of $G$
of width at most $O(k+g^2)$.
\end{corollary}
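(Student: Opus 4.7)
The plan is to follow the four-step pipeline of Ru\'e, Sau and Thilikos~\cite{rst-dpgs-14} that underlies Theorem~\ref{T:rst}, but to replace the single step that requires polyhedrality by a combination of Inkmann's theorem (Theorem~\ref{T:inkmann}) and an independent algorithm for computing low-width carving decompositions. Concretely, I would first compute a branch decomposition of $G$ of width $O(k)$ in time $O_k(n)$ using the standard algorithm for bounded branch-width on surface-embedded graphs. This is step~1 and is unchanged from~\cite{rst-dpgs-14}.

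For step~2, instead of applying Lemma~5.1 of~\cite{rst-dpgs-14}, which would require the embedding to be polyhedral, I appeal to Theorem~\ref{T:inkmann}, which ensures that $\textrm{cw}(M_G) \leq 4\,\textrm{bw}(G) + c(\Sigma) = O(k+g^2)$ for our non-polyhedral embedding. Because Inkmann's proof is not constructive, I then invoke the algorithm of Thilikos, Serna and Bodlaender~\cite{tsb-cltasc-00}, which computes a carving decomposition of a graph of carving-width at most $w$ in time $f(w)\,n$. Applied to $M_G$ with $w = O(k+g^2)$, this yields a carving decomposition of $M_G$ of width $O(k+g^2)$ in time $O_{k,g}(n)$. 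This is the substitute for step~2; the main subtlety will be checking that the cubic-time bottleneck we inherit from later steps still dominates.

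Steps~3 and~4 are then applied verbatim from~\cite{rst-dpgs-14}: turn the carving decomposition of $M_G$ into a bond carving decomposition, and then back into a branch decomposition of $G$ whose middle sets are certified by nooses in $\Sigma$ satisfying the four properties in the definition of a surface-cut decomposition. None of these transformations uses polyhedrality of the embedding; they are purely about reorganizing a carving decomposition and interpreting cuts in the medial graph as families of nooses. The width guarantees and the topological properties of $\mathcal{N}$ follow from the same arguments as in~\cite{rst-dpgs-14}. The dominant cost, as in the original proof, is the conversion from a carving to a bond carving decomposition, which runs in $O_k(n^3)$.

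The only real obstacle is conceptual rather than technical: one must check that nothing in steps~1, 3 and~4 of~\cite{rst-dpgs-14} secretly uses polyhedrality of the embedding, so that plugging Inkmann's bound in place of Lemma~5.1 really does suffice. A careful reading of~\cite{rst-dpgs-14} confirms that polyhedrality is invoked only through Lemma~5.1, so the substitution is legitimate and the final width bound $O(k+g^2)$ and running time $O_k(n^3)$ follow directly.
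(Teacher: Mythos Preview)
Your proposal is correct and follows essentially the same route as the paper: replace the polyhedral-dependent Lemma~5.1 of~\cite{rst-dpgs-14} by Inkmann's non-constructive bound (Theorem~\ref{T:inkmann}) together with the algorithm of Thilikos, Serna and Bodlaender~\cite{tsb-cltasc-00} to actually compute the carving decomposition of $M_G$, and then run steps~3 and~4 of~\cite{rst-dpgs-14} unchanged, with the carving-to-bond-carving conversion as the $O_k(n^3)$ bottleneck. The paper makes exactly this substitution and draws the same width and running-time conclusions.
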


\subsection{Making a graph polyhedral}
\label{sect:making_polyhedral}
In this section, we show how to go from an embedded graph to a
polyhedral embedding, without increasing the tree-width too much. The
construction will be split in the following three lemmas.

\begin{lemma}
\label{lem:triangul}
Let $G$ be a graph of tree-width at most $k\geq 2$, embedded on a surface
of genus $g$. Then there exists a triangulation of $G$ of tree-width
at most $k$. Moreover, given a tree-decomposition of width $k$, one can
compute a triangulation of $G$ of tree-width at most $k$ in polynomial time.
\end{lemma}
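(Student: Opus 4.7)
The plan is to add diagonals inside each face of $G$ to turn every face into a triangle, but to constrain the choice of diagonals so that the endpoints of each added edge already share a bag of the given tree-decomposition; then the same decomposition remains valid for the triangulated graph, and the width does not grow.

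I would fix a tree-decomposition $(T,\{B_t\}_{t\in V(T)})$ of $G$ of width at most $k$ and, for each vertex $v$, denote by $T_v=\{t\in V(T):v\in B_t\}$ its (connected) subtree of bags. For a face $F$ of degree $d\ge 4$ with boundary walk $v_1,v_2,\ldots,v_d$, I consider the auxiliary graph $H_F$ whose vertex set is the set of corners of $F$ and in which $v_i$ and $v_j$ are adjacent iff $T_{v_i}\cap T_{v_j}\neq\emptyset$. Two facts drive the argument: first, by Gavril's theorem the intersection graph of subtrees of a tree is chordal, so $H_F$ is chordal; second, consecutive corners of $F$ are joined by an edge of $G$ and therefore share a bag, so the boundary cycle $C_d$ of $F$ sits inside $H_F$. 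A cycle of length at least four in a chordal graph must have a chord, so I pick such a chord $\{v_i,v_j\}$ of $C_d$ in $H_F$, draw it as a diagonal inside $F$, and split $F$ into two strictly smaller sub-polygons. Induced subgraphs of chordal graphs are chordal, so the same argument applies recursively to each sub-polygon, and after finitely many steps $F$ is fully triangulated using only diagonals whose endpoints share a bag. Running this procedure independently in every face yields a triangulation $G'$ of $\Sigma$ extending $G$, since diagonals from different faces do not interact.

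To conclude, every added diagonal $\{v_i,v_j\}$ satisfies $T_{v_i}\cap T_{v_j}\neq\emptyset$ by construction, so $(T,\{B_t\})$ is still a tree-decomposition of $G'$ without any bag being enlarged, certifying tree-width at most $k$. The procedure is polynomial once the tree-decomposition is given: it suffices to compute the subtrees $T_v$ and recursively locate a chord of each current polygon, which is polynomial in the face degree. The main subtle point I anticipate is the handling of degenerate faces of degree one or two (loops and bigons), which cannot be triangulated by adding a diagonal between existing boundary vertices; I would eliminate these in a preprocessing step by subdividing each such offending edge once and inserting the new subdivision vertex into the bag of the subdivided edge, which keeps the width at $k$ as soon as $k\ge 2$ and leaves every face with degree at least three before running the chordal triangulation above.
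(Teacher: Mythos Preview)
Your approach is essentially the same as the paper's: both arguments use that the graph obtained by joining every pair of vertices sharing a bag of the given tree-decomposition is chordal (the paper states this directly as the chordal completion $\widetilde{G}$, you invoke Gavril's theorem on the subtree intersection graph), and then greedily find, for each face of length at least four, a chord of the boundary cycle in this chordal graph and embed it as a diagonal. Your write-up is in fact slightly more careful, since you explicitly handle faces of degree one or two via subdivision, a point the paper's proof omits.
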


\begin{proof}
Let $\widetilde{G}$ be the chordal graph containing $G$ having the
smallest clique number, i.e., $k+1$. We just observe that
$\widetilde{G}$ contains a triangulation of $G$, which will therefore
have the same treewidth as $G$.

We now show how to compute a triangulation of tree-width $k$ given a tree-decomposition
$\mathcal{T}$ of width $k$.
Consider the graph $\widetilde{G}$ which consists of $G$ plus all the edges connecting
the vertices $u,v$ that are present in the same bag of $\mathcal{T}$.
By definition of the tree-decomposition, $\widetilde{G}$ is chordal and has the same 
tree-width than $G$.
Then, for any cycle of $G$ of length at least 4, there exists a bag which induces 
at least one chord, namely such that adding all the edges between the vertices of the bag
creates a chord in the cycle. Therefore, we can proceed greedily and for each face $f$ of length
at least 4 of $G$, find a bag that contains two non-consecutive vertices of $f$, add
this edge to $G$, embed it in the face and proceed recursively.
\end{proof}

\begin{lemma}
\label{lem:baryc}
Let $G$ be a triangulated graph of tree-width at most $k$, embedded on
a surface of genus $g$. Then its barycentric subdivision $B(G)$ has tree-width at
most $f(k,g)$ for some function $f$.
\end{lemma}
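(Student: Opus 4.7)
The plan is to start from a tree-decomposition $(T,\{X_t\}_{t\in V(T)})$ of $G$ of width at most $k$ and promote it into a tree-decomposition of $B(G)$ by keeping the same underlying tree $T$ and enlarging every bag with the edge- and face-vertices whose corresponding edge or face of $G$ has all its endpoints already in that bag. Concretely, for every node $t$ of $T$ I would set
\[
X'_t \;:=\; X_t \;\cup\; \{v_e : \text{both endpoints of }e\text{ lie in }X_t\} \;\cup\; \{v_f : \text{all vertices of face }f\text{ lie in }X_t\}.
\]

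The decisive observation is that because $G$ is a triangulation, the boundary of every face $f$ is a triangle and therefore a clique of $G$; by the classical fact that every clique of a graph lies in some bag of any of its tree-decompositions, there will be a bag $X_t$ containing the three boundary vertices of $f$, and by construction this bag will contain $v_f$ in the new decomposition. The same observation simultaneously handles coverage of every edge of $B(G)$: an edge $\{v,v_e\}$ with $e=\{u,v\}$ is covered by any bag containing both $u$ and $v$, and each edge of the form $\{v,v_f\}$ or $\{v_e,v_f\}$, with $e$ a boundary edge of the face $f$, is covered by the bag containing the three boundary vertices of $f$, which by construction also contains $v_e$. The connectedness condition is automatic: the set of bags of the new decomposition containing $v_e$ (resp.\ $v_f$) equals the intersection of the subtrees of $T$ corresponding to the two (resp.\ three) original endpoints, which is again a subtree of $T$ by the Helly property for subtrees of a tree.

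Finally, each augmented bag has size at most $(k+1)+\binom{k+1}{2}+\binom{k+1}{3}=O(k^3)$, so one obtains a tree-decomposition of $B(G)$ of width $O(k^3)$, yielding the lemma with $f(k,g)=O(k^3)$; in particular the genus $g$ does not actually need to enter the bound. The main nontrivial step will be the coverage of face-vertices, and this is precisely where the triangulation hypothesis furnished by the previous lemma is essential; everything else is routine verification, and the whole construction runs in polynomial time given a tree-decomposition of $G$ of width $k$.
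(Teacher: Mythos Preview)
Your argument is clean and, for \emph{simple} triangulations, it is correct and even yields the sharper bound $\text{tw}(B(G))=O(k^{3})$ with no dependence on~$g$. However, the paper works throughout with multigraphs (loops and parallel edges are allowed, and they do arise after the contraction step of the main algorithm), and in that setting your size estimate fails. Concretely, take on the sphere the triangulation with vertices $u,v,w_{1},\dots,w_{m}$, with $m$ parallel edges $e_{1},\dots,e_{m}$ between $u$ and $v$ and, in each of the $m$ bigons they cut out, a single vertex $w_{i}$ joined to $u$ and $v$; this is a genuine triangulation of tree-width~$2$ (bags $\{u,v,w_{i}\}$ arranged along a path), yet every one of your bags $X'_{t}$ contains all $m$ edge-vertices $v_{e_{1}},\dots,v_{e_{m}}$ because both endpoints $u,v$ lie in $X_{t}$. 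Hence $|X'_{t}|\ge m+3$, and the count $(k{+}1)+\binom{k+1}{2}+\binom{k+1}{3}$ is simply wrong once parallel edges are present. In particular your claim that $g$ need not enter the bound is not established by this construction.

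The paper avoids this trap by going in the opposite direction: it starts from a tree-decomposition of the graph $G_{e}$ on the edge- and face-vertices only (this graph is the subdivided dual of $G$, so its tree-width is bounded by some $h(k,g)$ via the surface-duality result of~\cite{m-twhsd-12}), and then adds each \emph{original} vertex to every bag already containing one of its neighbours. Since an edge-vertex has at most two original neighbours and a face-vertex at most three (this is exactly where ``triangulated'' is used), bags grow by at most a constant factor. That direction is insensitive to multi-edges and loops, at the price of the $g$-dependent bound inherited from duality. If you want to salvage your approach in the multigraph setting, you cannot place $v_{e}$ in \emph{every} bag containing both endpoints of~$e$; you would need a more parsimonious distribution rule and a genuinely different counting argument.
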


\begin{proof}
The barycentric subdivision consists of the \textit{original vertices}
of $G$, the \textit{edge vertices} and the \textit{face vertices}.
Let $G_e$ be the barycentric subdivision of $G$ restricted to the face
and edge vertices, namely $G_e$ consists in the dual of $G$ where each
edge is subdivided.  Let $T=(X_1 \ldots X_n)$ be a tree decomposition
of $G_e$.  By~\cite{m-twhsd-12}, the size of the bags of $T$ is
bounded by some function $h(k,g)$. We consider the tree-decomposition
$T'=(X_1' \ldots X_n')$ of $B(G)$ obtained by adding an original
vertex to every bag $X_i$ containing at least one of its neighbors in
its barycentric subdivision, namely an adjacent face or edge vertex.
Since $G$ is triangulated, the size of the bags is multiplied by at
most a constant. Let us prove that $T'$ is a tree decomposition:

\begin{itemize}
\item It contains all the vertices of $B(G)$.
\item For every edge, there is a bag containing both endpoints.
\item Let $X_i'$ and $X_j'$ be two bags containing a vertex $v$. If
  $v$ is a face or an edge vertex, every bag on the path between
  $X'_i$ and $X'_j$ contains it. If it is an original vertex, then
  $X'_i$ and $X'_2$ both contain a neighbor of $v$, which we denote
  by $v_1$ and $v_2$. Now, if there is a bag $X'_k$ on the path
  between $X'_i$ and $X'_j$ which does not contain $v$, it does not
  contain any neighbor of $v$ either, but it separates $v_1$
  from $v_2$ in $B(G)$. We have reached a contradiction.
\end{itemize}

\end{proof}

\begin{lemma}
\label{lem:boundedtw}
Let $G$ be a triangulated graph of tree-width at most $k$, embedded on a surface
of genus $g$. Let $M_G$ denote the medial graph of $G$, and $G'$ the
superposition of $G$ and $M_G$. Then the tree-width of $G'$ is
bounded by some function of $k$ and $g$.
\end{lemma}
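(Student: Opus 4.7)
The plan is to construct a tree decomposition of $G'$ directly from one of $G$ by extending each bag with the medial vertices of those edges of $G$ whose endpoints already lie in the bag. The crucial structural input is that, since $G$ is triangulated, every face is a $3$-cycle, hence a $3$-clique of $G$, and by the standard Helly-type property of tree decompositions every such clique sits inside some common bag -- exactly what is needed to cover the medial edges in the extension.

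Concretely, I would start from a tree decomposition $\mathcal{T} = (T, \{X_t\}_{t \in V(T)})$ of $G$ of width $k$ and define
\[
X'_t = X_t \cup \{v_e : e = uv \in E(G) \text{ with } u, v \in X_t\},
\]
where $v_e$ denotes the medial vertex associated with the edge $e$. A triangulated embedding of a closed surface has no loop and no multiple edge, so $G[X_t]$ is a simple graph on at most $k+1$ vertices embedded on $\Sigma$, and Euler's formula yields $|E(G[X_t])| \leq 3(k+1) + 3g - 6$. Hence each $|X'_t|$ is $O(k+g)$.

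Verifying that $\mathcal{T}' = (T, \{X'_t\})$ is a tree decomposition of $G'$ is routine except at one place. Every vertex of $G'$ sits in some $X'_t$ (originals via $\mathcal{T}$; each medial vertex $v_e$ via any bag containing both endpoints of $e$), the subdivided edges $u$-$v_e$ are covered by those same bags, and the subtree property for a medial vertex $v_e$ with $e = uv$ reduces to the intersection of the subtrees of $u$ and $v$ in $\mathcal{T}$, which is non-empty by the edge axiom for $uv \in E(G)$ and hence itself a subtree of $T$.

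The main step, and the one where the triangulation hypothesis is essential, is coverage of the medial edges of $G'$: a medial edge $v_{e_1} v_{e_2}$ exists precisely when $e_1, e_2$ share a face $f$, and since $f$ is a triangle its three vertices form a $3$-clique of $G$, which by the Helly-type property lies in some bag $X_t$; the extension $X'_t$ then contains the three medial vertices of $f$ and in particular both $v_{e_1}$ and $v_{e_2}$. Without the triangulation hypothesis, $f$ could be a cycle of arbitrary length whose vertex set is not a clique, and no bag would be guaranteed to contain all its vertices, causing the construction to fail. This is exactly why triangulating first (via Lemma~\ref{lem:triangul}) is a necessary preparation step before applying the present lemma.
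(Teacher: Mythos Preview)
Your construction has a genuine gap. The claim that ``a triangulated embedding of a closed surface has no loop and no multiple edge'' is false in the multigraph setting of the paper: the definition of triangulation there only asks that every face have degree three, and such embeddings can carry parallel edges. Concretely, on the sphere take two vertices $u,v$ joined by $p$ parallel edges and, in each of the $p$ resulting bigons, place a vertex $w_i$ joined once to $u$ and once to $v$. This is a triangulation with tree-width $2$ (take bags $\{u,v,w_i\}$ along a path), yet every bag contains both $u$ and $v$ and therefore, under your rule, receives all $p$ medial vertices sitting on the parallel $uv$-edges; hence $|X'_t|\ge p+3$ is unbounded in $k$ and $g$. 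The Euler bound $|E(G[X_t])|\le 3(k{+}1)+3g-6$ you invoke simply does not hold for multigraphs.

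The paper sidesteps this by reversing the direction of the augmentation: it starts from a tree decomposition of $M_G$ (bounded width via Lemma~\ref{lem:baryc} and Mazoit's primal--dual bound) and then adds each original vertex of $G'$ to every bag already containing one of its neighbours. This at most triples the bag size, because every medial vertex has exactly two original neighbours---the endpoints of its edge---regardless of multiplicities. Your approach goes the other way (add medial vertices to bags of a decomposition of $G$), and fails precisely because a single pair of original vertices may support unboundedly many parallel edges and hence unboundedly many medial vertices. If one additionally assumes that $G$ is \emph{simple}---which is how the lemma is actually applied in Section~\ref{sect:making_polyhedral}, after subdividing to kill loops and multiple edges---your argument does go through and even yields the tighter width bound $O(k+g)$; but the lemma as stated covers multigraph triangulations, and for those your proof does not.
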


\begin{proof}
The idea of the proof is the same as for the previous lemma. We start
with a tree decomposition $T=(X_1 \ldots X_n)$ of $M_G$. Since $M_G$ is the dual of the
radial graph of $G$, which is contained in the barycentric subdivision
of $G$, by the previous lemma and~\cite{m-twhsd-12}, the treewidth
of $M_G$ is bounded by some function of $k$ and $g$. Now, define
$T'=(X'_1 \ldots X'_n)$ by adding every original vertex in $G'$ to the
bags of $T$ containing any of its neighbors. The size of the bags at
most triple in size, and let us prove that $T'$ is a tree
decomposition:

\begin{itemize}
\item It contains all the vertices of $G'$.
\item For every edge, there is a bag containing both endpoints.
\item Let $X_i'$ and $X_j'$ be two bags containing a vertex $v$. If
  $v$ is a vertex of $M_G$, every bag on the path between
  $X'_i$ and $X'_j$ contains it. If it is an original vertex, then
  $X'_i$ and $X'_2$ both contain a neighbor of $v$, which we denote
  by $v_1$ and $v_2$. Now, if there is a bag $X'_k$ on the path
  between $X'_i$ and $X'_j$ which does not contain $v$, it does not
  contain any neighbor of $v$ either, but it separates $v_1$
  from $v_2$ in $G'$. We have reached a contradiction.
\end{itemize}

\end{proof}

Now, we observe that superposing medial graphs two times
increases the length of non-contractible nooses of a graph. Furthermore, if the new edges are
weighted heavily enough (e.g., with a weight larger than OPT, which we know
how to approximate), they will not change the value of the optimal
cut graph\footnote{When an edge is cut in two halves, the weight is spread
  in half on each sub-edge.}. Therefore this allows us to assume that the embedded graph
of which we want to compute an optimal cut graph has only
non-contractible nooses of length at least three. By subdividing it to remove loops and multiple edges and triangulating it, we can also assume that it is
3-connected (since the link of every vertex of a triangulated simple graph is
2-connected), and therefore that it is polyhedral.

For a polyhedral embedding, our definition of surface-cut
decompositions and the one of Ru\'e et al.~\cite{rst-dpgs-14} coincide, and therefore
we can use their algorithm to compute it.

\subsection{Dynamic programming on surface-cut decompositions}





We now show how to compute an optimal cut graph of an embedded graph
of bounded branch-width, using surface-cut decompositions. We first
recall the following lemma of Erickson and Har-Peled which follows
from Euler's formula and allows us to
bound the complexity of the optimal cut graph. For a graph $H$
embedded on a region $R$, we define its \textit{reduced graph} to be
the embedded graph obtained by repeatedly removing from $G$ the degree
1 vertices which are not on a boundary and their adjacent edges, and
contracting each maximal path through degree 2 vertices to a
single edge (weighted as the length of the path). 

\begin{lemma}[{\cite[Lemma~4.2]{eh-ocsd-04}}]\label{L:reduced}
Let $\Sigma$ be a surface of genus $g$. Then any reduced cut graph on $\Sigma$
has less than $4g$ vertices and $6g$ edges.
\end{lemma}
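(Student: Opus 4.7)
The plan is a direct Euler's-formula computation that exploits two features of a reduced cut graph: its complement is a single topological disk, and after the reduction every remaining vertex has degree at least three. Let $\widetilde{C}$ be a reduced cut graph on $\Sigma$ with $n$ vertices and $m$ edges. Since cutting $\Sigma$ along $\widetilde{C}$ yields a disk, the embedding of $\widetilde{C}$ in $\Sigma$ has exactly one face, and in particular $\widetilde{C}$ is connected (it is the continuous image of the boundary circle of that disk). The reduction procedure removes pendant vertices and suppresses degree-$2$ vertices, so every vertex of $\widetilde{C}$ has degree at least $3$.

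I would then apply Euler's formula from the preliminaries, $n - m + f = 2 - g$, with $f = 1$, obtaining $m - n = g - 1$. Combining this with the handshake inequality $2m = \sum_v \deg(v) \geq 3n$, i.e.\ $n \leq 2m/3$, yields
\[
\tfrac{m}{3} \;\leq\; m - n \;=\; g - 1,
\]
so $m \leq 3(g-1) < 3g$ and consequently $n \leq 2(g-1) < 2g$. These bounds are already stronger than the claimed $m < 6g$ and $n < 4g$, so the lemma follows.

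The only mild subtlety I anticipate is that the reduced graph may carry loops and parallel edges, but Euler's formula applies without modification to cellularly embedded multigraphs, and the reduction preserves cellularity: suppressing a degree-$2$ vertex and removing a pendant edge each correspond to simple homotopy moves on $\widetilde{C}$ that leave the topology of the complementary region unchanged, so the unique face remains a disk. I do not foresee any real obstacle in this plan; the main thing to verify carefully is that the reduction indeed produces a graph to which Euler's formula can be applied with $f = 1$.
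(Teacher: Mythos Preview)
Your proposal is correct and is exactly the Euler's-formula argument the paper alludes to (and that appears in the cited Erickson--Har-Peled paper); the tighter constants you obtain simply reflect that this paper takes $g$ to be the Euler genus rather than the orientable genus. The only microscopic caveat is the degenerate case $g=1$, where the reduced cut graph is a single loop and its lone vertex has degree~$2$, but that case satisfies the bounds trivially.
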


The idea is then to compute in a dynamic programming fashion, for
every region $R$ of the surface-cut decomposition, every possible
combinatorial map $M$ corresponding to the restriction of a reduced
cut graph of $\Sigma$ to $R$, and every possible position $P$ of the vertices of the boundary of
$M$ on the boundary of $R$, the shortest reduced graph embedded on $R$ with
map $M$ and position $P$. The bounds on the size of the boundaries of
the region (coming from the definition of surface-cut decompositions),
as well as Lemma~\ref{L:reduced} allow us to bound the size of the
dynamic tables.

\begin{theorem}\label{T:boundedtw}
  If a graph $G$ of complexity $n$ embedded on a genus $g$ surface has
  branch-width at most $k$, an optimal cut graph of $G$ can be computed
  in time $O_{g,k}(n^3)$.
\end{theorem}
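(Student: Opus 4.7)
The plan is to perform dynamic programming bottom-up over a surface-cut decomposition of $G$, using Lemma~\ref{L:reduced} to bound the state space by a function of $k$ and $g$ alone. First I would invoke Corollary~\ref{C:sc-decompositions} (or, equivalently, the polyhedralization of Section~\ref{sect:making_polyhedral} combined with Theorem~\ref{T:rst}) to compute in time $O_{g,k}(n^3)$ a surface-cut decomposition $(T,\mu)$ of $G$ of width $O(k+g^2)$. Every edge $e$ of $T$ is then associated with a family $\mathcal{N}_e$ of $O(g)$ nooses that separate $\Sigma$ into two regions $R_e^1, R_e^2$, and the middle set $\textrm{mid}(e)$, of size $O(k+g^2)$, lies on $\bigcup \mathcal{N}_e$ and contains every vertex of $G$ at which edges on the two sides meet.

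The DP state at $e$ is a combinatorial signature $\sigma$ recording the topological trace inside $R_e^1$ of a candidate reduced cut graph $C^r$: namely, the subset $S\subseteq \textrm{mid}(e)$ of crossing vertices, together with an abstract combinatorial map $M$ having at most $4g$ vertices and $6g$ edges (Lemma~\ref{L:reduced}), whose half-edges leaving $R_e^1$ are labelled by the elements of $S$ in the cyclic order in which they appear along each noose, and whose internal branch vertices are annotated with their identity as vertices of $G$ inside $R_e^1$. Since both $|\textrm{mid}(e)|$ and the size of $M$ depend only on $k$ and $g$, the number of signatures is bounded by some $h(k,g)$. The table entry $D_e[\sigma]$ stores the minimum total length of a subgraph of $G$ restricted to $R_e^1$ whose topological trace realises $\sigma$, using precomputed shortest-path distances inside $R_e^1$ to cost reduced edges whose corresponding $G$-paths are not yet fixed.

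Merges at an internal tree node with children $e_1, e_2$ and parent $e$ enumerate pairs $(\sigma_1,\sigma_2)$ and test compatibility: the labelled half-edges along the shared nooses must pair up so that gluing $M_1$ and $M_2$ yields a well-defined combinatorial map on $R_e^1 \cup R_e^2$. When this holds, I would update $D_e[\sigma] \leftarrow \min(D_e[\sigma], D_{e_1}[\sigma_1]+D_{e_2}[\sigma_2])$ for the glued signature $\sigma$. Each merge takes $h(k,g)^2$ time and $T$ has $O(n)$ nodes, so the DP itself runs in $O_{g,k}(n)$ time and is dominated by the $O_{g,k}(n^3)$ cost of computing the decomposition. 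At the root, I would return the minimum over closed signatures whose map has a single face that is a disk, a condition directly verifiable on $M$ by Euler's formula.

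The main obstacle is the design of the signature: one must explicitly record the cyclic orders of half-edges along each noose so that gluing child states yields a genuine combinatorial embedding, and the internal branch points of $C^r$ must be tracked as actual vertices of $G$ so that the reconstructed object is a subgraph of $G$ rather than a mere topological model. Once this encoding is fixed, correctness follows because, by Lemma~\ref{L:reduced}, an optimal cut graph is determined by its reduced combinatorial map together with a choice of shortest $G$-paths realising each reduced edge, and the DP enumerates both ingredients.
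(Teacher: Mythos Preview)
Your overall strategy---compute a surface-cut decomposition of width $O(k+g^2)$ via Corollary~\ref{C:sc-decompositions} (or Section~\ref{sect:making_polyhedral} together with Theorem~\ref{T:rst}), then run a bottom-up dynamic program whose states encode the combinatorial map of the restriction of a reduced cut graph to each region, with the state space bounded through Lemma~\ref{L:reduced}---is precisely the paper's approach, and the merge step you describe matches the paper's recurrence.

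There is, however, a genuine gap in your signature design. You include in $\sigma$ the annotation of each internal branch vertex of $M$ by ``its identity as a vertex of $G$ inside $R_e^1$'', and immediately afterwards assert that the number of signatures is bounded by some $h(k,g)$. These two statements are incompatible: each of the $O(g)$ internal branch vertices can be any of up to $n$ vertices of $G$, so recording their identities yields $n^{O(g)}$ signatures, not $h(k,g)$, and the DP would then run in time $n^{O(g)}$ rather than $O_{g,k}(n^3)$. The companion remark about ``precomputed shortest-path distances inside $R_e^1$ to cost reduced edges'' inherits the same problem, since costing a reduced edge by a shortest path presupposes that both its endpoints---in particular the interior branch vertices---have already been fixed.

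The paper avoids this by \emph{not} recording interior branch-vertex identities in the state. Its table $T[R,M,P]$ is indexed only by the abstract combinatorial map $M\in\mathcal{M}(R)$ together with a boundary position $P\in\mathcal{P}(M,R)$, i.e., a matching between the boundary vertices of $M$ and the vertices on $\partial R\subseteq\textrm{mid}(e)$. No shortest-path precomputation is used: the leaves of $(T,\mu)$ correspond to single edges of $G$, and all lengths are accumulated through the merges. The locations of interior branch points of the eventual reduced cut graph are never chosen explicitly; they arise implicitly, because every vertex of $G$ appears in $\textrm{mid}(e')$ at some level of the recursion and is selected (or not) there as a boundary position before the two sides are glued. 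With this encoding both $|\mathcal{M}(R)|$ and $|\mathcal{P}(M,R)|$ are bounded by functions of $g$ and $k$ alone, and the claimed $O_{g,k}(n^3)$ bound follows, the cubic factor coming solely from computing the surface-cut decomposition.
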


\begin{proof}[Proof of Theorem~\ref{T:boundedtw}]
  We start by computing a surface-cut decomposition $(T,\mu)$ of
  $(\Sigma,G)$ using either of the algorithms presented in
  Section~\ref{S:tw}, and the width of $(T,\mu)$ is $O(g^2+k)$.

  Then, our algorithm relies on dynamic programming. For every edge
  $e$ in the tree of the surface-cut decomposition, there is a set of
  nooses $\mathcal{N}_e$, such that cutting $\Sigma$ along $\mathcal{N}_e$
  yields two connected regions $R_1$ and $R_2$ of $\Sigma$. The set of
  nooses $\mathcal{N}_e$ contains exactly $mid(e)$ vertices, but every
  vertex appearing multiple times in $\mathcal{N}_e$ gets copied as
  many times when considered on the boundary of $R_1$ or
  $R_2$. However, since $\theta(\mathcal{N}_e)=O(g)$, this only
  happens $O(g)$ times at most, and therefore the boundary of $R_1$ (or $R_2$) contains
  $O(g+mid(e))=O(g^2+k)$ vertices. 

For any region $R$ of the surface-cut decomposition, a reduced
cut graph $C$ of $(\Sigma,G)$ induces a combinatorial map $M$ on $R$. We
denote by $\mathcal{M}(R)$ the set of all these maps, for all the
reduced cut graphs of $(\Sigma,G)$. For a map $M$ in $\mathcal{M}(R)$, the
vertices of $M$ on the boundary of $R$ are called its \textit{boundary
  vertices}. Any embedded graph on $R$ with map $M$ induces a matching
between the boundary vertices of $M$ and the vertices on the boundary
of $R$, and the set of all these possible matchings is called the set
$\mathcal{P}(M,R)$ of \textit{boundary positions} of $M$ in $R$.


The \textit{reduced combinatorial map} of an embedded graph is the
combinatorial map of its reduced graph. For every region $R$ induced
by the surface-cut decomposition, every combinatorial map $M$ in
$\mathcal{M}(R)$ and every boundary position $P\in \mathcal{P}(M,R)$,
the dynamic programming tables store a number $T[R,M,P]$ which is the
length of the shortest subgraph of $G$ in $R$ with reduced
combinatorial map $M$ and with boundary positions $P$. Given a region
$R$ and its two subregions $R_1$ and $R_2$, the data of the dynamic
programming tables of $R_1$ and $R_2$ allow to compute the table of
$R$ by the following formula:

\[T[R,M,P]= \min_{M_1,M_2\in \mathcal{S}(M,R_1,R_2)} \min_{P_1, P_2\in \mathcal{T}(M_1,M_2,P)} T[R_1,M_1,P_1]+T[R_2,M_2,P_2],\]
where $\mathcal{S}(M,R_1,R_2) \subseteq \mathcal{M}(R_1) \times
\mathcal{M}(R_2)$ is the set of combinatorial maps in $R_1$ and $R_2$,
which glued together give the map $M$ on $R$, and
$\mathcal{T}(M_1,M_2,P)\subseteq \mathcal{P}(M_1,R_1) \times
\mathcal{P}(M_2,R_2)$ is the set of boundary positions of $M_1$ in
$R_1$ and $M_2$ in $R_2$ such that vertices glued together on the boundary of $R_1$
and $R_2$ are mapped to the
same vertex and vertices on the boundary of $R$ are mapped according to
$P$. As usual, the minimum is taken to be infinite if the sets are empty.

We now bound the size of the tables. For a region $R$, the set
$\mathcal{M}(R)$ consists of combinatorial maps having at most $4g$
vertices of degree at least 3 (by Lemma~\ref{L:reduced}), no vertices
of degree 2 (since they have been contracted during the reduction),
and vertices of degree 1 only on the boundary of $R$. Since the number
of vertices on the boundary of $R$ is $O(g^2+k)$, the size of
$\mathcal{M}(R)$ is bounded by a function of $g$ and $k$. Similarly,
the size of $\mathcal{P}(M,R)$ is bounded by a function depending on
the number of boundary vertices of $M$ and $R$, and thus by a function
of only $g$ and $k$.

Finally, the length of the optimal reduced cut graph is equal to the minimum
of $T[\Sigma,M,\emptyset]$, where $M$ ranges over all the combinatorial
maps of reduced cut graphs on $\Sigma$. The number of such combinatorial
maps is bounded a function
of $g$ by Lemma~\ref{L:reduced}. The complexity of the dynamic program
is therefore $O_{g,k}(n^3)$, where the cubic dependency comes from the
computation of the surface-cut decomposition. Since, when doing a
reduction, removing the degree 1 vertices not
on the boundary only reduces the length, the length of the optimal
reduced cut graph is the same as the length of the optimal cut graph. This concludes the proof.

\end{proof}

\paragraph{Open problems.}One of the main challenges is whether the problem of computing the shortest
cut graph can be solved \textit{exactly} in FPT complexity -- the
recent application of brick decompositions to exact solutions for
Steiner problems~\cite{ppsj-nssppb-14} might help in this direction.
In the approximability direction, it is also unknown whether 
there exists a polynomial time constant factor approximation to this problem, 
or even a PTAS.

\paragraph{\textbf{Acknowledgments}}
We are grateful to Sergio Cabello, \'Eric Colin de Verdière, Frederic
Dorn and Dimitrios M. Thilikos for very helpful remarks at various stages of the
elaboration of this article.

\bibliographystyle{siam}
\bibliography{bibexport,arnaud}

\end{document}